\renewcommand{\@Opargbegintheorem}[4]{%
  #4\trivlist\item[\hskip\labelsep{#3#2\@thmcounterend}]}
\newif\ifnewexample
\newcommand{\sQuad}{\hspace*{0.5em}} 
\newcommand{\qedhere}{\qed} % Allow putting qed at custom places in standard way
\newcommand{\true}{\mathit{true}}
\newcommand{\false}{\mathit{false}}
\newcommand{\E}{\mathcal{E}}
\newcommand{\Zplus}{\mathcal{Z}^+}
\newcommand{\Zminus}{\mathcal{Z}^-}
\newcommand{\Zaplus}{Z_{a}^+}
\newcommand{\Zaminus}{Z_{a}^-}
\newcommand{\Zalphaplus}{Z_{\alpha}^+}
\newcommand{\Zalphaminus}{Z_{\alpha}^-}
\newcommand{\Zbplus}{Z_{b}^+}
\newcommand{\Zbminus}{Z_{b}^-}
\newcommand{\Zcplus}{Z_{c}^+}
\newcommand{\Zcminus}{Z_{c}^-}
\newcommand{\pbesEquations}{\E_L}
\newcommand{\pbesEc}{{\pbesEquations}\E_{\Zplus} \E_{\Zminus}}
\newcommand{\pbescore}{\mathit{core}(\E)}
\newcommand{\pbes}{\mathcal{E}}
\newcommand{\bnd}{\mathrm{bnd}}
\newcommand{\occ}{\mathrm{occ}}
\newcommand{\rank}{\mathrm{rank}}
\newcommand{\sig}{\mathrm{sig}}
\newcommand{\D}{\mathbb{D}}
\newcommand{\rhs}{\mathbf{RHS}}
\newcommand{\rhsc}{\rhs^c}
\newcommand{\Ec}{\mathbf{E}^c}
\newcommand{\predvars}{\mathcal{X}}
\newcommand{\act}{\mathcal{A}}
\newcommand{\match}{\mathrm{match}}
\newcommand{\PBES}{\mathrm{PBES}}
\newcommand{\PG}{\mathrm{PG}}
\newcommand{\pgG}{\mathrm{G}}
\newcommand{\solve}[1]{\mathit{solve}(#1)}
\newcommand{\truepbes}{\true(\pbes)}
\newcommand{\combine}[1]{\mathit{combine}(#1)}
\newcommand{\cmark}{\ding{51}}%
\newcommand{\xmark}{\ding{55}}%
\newcommand{\mubox}[1]{[ #1 ]}
\newcommand{\mudiamond}[1]{\langle #1 \rangle}
\newcommand{\sem}[1]{\llbracket #1 \rrbracket}
\newcolumntype{L}{>{$}l<{$}}%stopzone%stopzone%stopzone
\newcolumntype{C}{>{$}c<{$}}%stopzone%stopzone%stopzone
\newcolumntype{R}{>{$}r<{$}}%stopzone%stopzone%stopzone
{\begin{trivlist}\small
\item\begin{tabular}{@{}>{\bfseries}p{2.2em}L@{\ }L@{\ }L@{\ }L@{\ }L@{\ }L@{\ }L@{\ }L}}%
{\end{tabular}\end{trivlist}}
\newenvironment{mcrl2equations}%
{\begin{trivlist}\footnotesize
\item\centering\begin{tabular}{@{\ }L@{\ }L@{\ }L@{\ }L@{\ }L@{\ }L@{\ }L@{\ }L}}%
{\end{tabular}\end{trivlist}}
\tikzstyle{every state}=[draw,shape=circle,inner sep=1pt,minimum size=8pt]
\tikzset{initial text={},auto}
\begin{document}
\title{Efficient Evidence Generation \\ for Modal $\mu$-Calculus Model Checking \\ (extended version)}
\titlerunning{Efficient Evidence Generation for Modal $\mu$-Calculus Model Checking}

\author{Anna Stramaglia \and Jeroen J. A. Keiren \and Maurice Laveaux \and Tim A. C. Willemse} 
\authorrunning{A. Stramaglia, J.\,J.\,A. Keiren, M. Laveaux and T.\,A.\,C. Willemse}

\institute{Department of Mathematics and Computer Science, Eindhoven University of Technology, Eindhoven, The Netherlands\\
\email{\{a.stramaglia, j.j.a.keiren, m.laveaux, t.a.c.willemse\}@tue.nl}}

\maketitle

\begin{abstract}
Model checking is a technique to automatically assess whether a model of the behaviour of a system meets its requirements.
Evidence explaining why the behaviour does (not) meet its requirements is essential for the user to understand the model checking result.
Willemse and Wesselink showed that parameterised Boolean equation systems (PBESs), an intermediate format for $\mu$-calculus model checking, can be extended with information to generate such evidence.
Solving the resulting PBES is much slower than solving one without additional information, and sometimes even impossible.
In this paper we develop a two-step approach to solving a PBES with additional information: we first solve its \emph{core} and subsequently use the information obtained in this step to solve the PBES with additional information. We prove the correctness of our approach and we have implemented it, demonstrating that it efficiently generates evidence using both explicit and symbolic solving techniques.
\keywords{Model checking; modal $\mu$-calculus; parameterised Boolean equation systems; counterexamples}
\end{abstract}

\section{Introduction} 
Model checking~\cite{baier_principles_2008,clarke_model_2018} is an automated technique for establishing whether user-defined properties hold for (a model of) a system.
The behaviour of the system is typically specified using a modelling language whose semantics is represented in terms of a \textit{labelled transition system} or a \textit{Kripke structure}.
Requirements are expressed as formulas in LTL (linear temporal logic), or branching-time logics such as CTL (computation tree logic), CTL$^*$, or the modal \textit{$\mu$-calculus}.

Given the description of the system and a temporal logic formula, a \textit{model checker} answers the decision problem: `Does (the model of) my system meet its requirement?'.
The \emph{yes} / \emph{no}  answer alone does not explain why the requirement is (not) satisfied.
To this end, model checkers can provide evidence (often referred to as a witness or a counterexample) explaining the answer.

Model checking tools such as CADP~\cite{garavel_cadp_2013} and mCRL2~\cite{BGK+2019} use \textit{parameterised Boolean equation systems} (PBESs) to encode the \textit{$\mu$-calculus} model checking problem~\cite{groote_model-checking_2005}.
In mCRL2, PBESs are first instantiated to a parity game (or Boolean equation system)~\cite{van_dam_instantiation_2008,kant_efficient_2012,ploeger_verification_2011} using a process similar to state space exploration.
The resulting parity game is solved using standard algorithms such as the recursive algorithm~\cite{zielonka_infinite_1998}.
Wesselink and Willemse showed that the encoding of the model checking problem to PBES can be extended with additional information such that evidence explaining the solution can be extracted~\cite{wesselink_evidence_2018}.
The evidence subsequently allows for constructing a subgraph of the original state space that gives a minimal explanation of the outcome of the verification.

A fundamental problem in model checking is the \textit{state-space explosion problem}: the size of the state space underlying a system model grows exponentially in the number of (parallel) components and state variables.
Symbolic model checking~\cite{burch_symbolic_1992,mcmillan_symbolic_1993}
addresses this problem by using symbolic representations such as binary decision diagrams to compactly store the state space.
These ideas have been extended to symbolically explore and solve the parity game underlying a PBES~\cite{blom_symbolic_2008,BGK+2019,kant_efficient_2012,laveaux_--fly_2022}.
Symbolic PBES solvers are routinely used to solve the $\mu$-calculus model checking problem for large models.
For instance, the Workload Management System (WMS) model described in~\cite{remenska_using_2013} and the Mechanical Lung Ventilator (MLV) model from~\cite{dortmont-MLV-2024} could only be verified using symbolic algorithm.
However, in practice, the running time of solving PBESs with evidence information is so high that waiting for a solution is not an option.\vspace{-0.5em}

\paragraph{Contributions.}
Our main contribution in this paper is a new approach for evidence generation from PBESs. Our approach first solves a PBES without additional information. As a second step, the solution of this PBES is used to simplify the solving of the PBES that does have additional information needed for evidence generation.
We establish the correctness of the approach.

We have implemented this approach in the explicit PBES solver in mCRL2~\cite{BGK+2019}, and added a hybrid approach, in which the first step is performed symbolically. This solution is then used to inform the explicit PBES solver in the second step.

We experimentally demonstrate the effectiveness of our new approaches.
In particular, our experiments show that when the first step is done using the explicit solver, the performance is comparable with the original approach in~\cite{wesselink_evidence_2018}.
When using the symbolic solver for the first step, our approach is able to efficiently generate evidence, also in the cases where this was not feasible before.\vspace{-0.5em}

\paragraph{Related work.}
For a comprehensive overview of diagnostics for model checking, we refer to Busard's thesis~\cite{busard_symbolic_2017}.
We limit ourselves to the closest approaches providing evidence or diagnostics for the modal $\mu$-calculus model checking problem.
Such diagnostics have for instance been described using tableaux~\cite{kick-95} and as two-player games~\cite{stirling_local_1991}.
There are several graph-based approaches describing evidence in the literature. Mateescu~\cite{mateescu_efficient_2000} describes evidence for the alternation free $\mu$-calculus as a subgraph of an extended Boolean graph. Cranen et al.~\cite{cranen_proof_2013} describe \textit{proof graphs}, that are an extension of support sets~\cite{tan_evidence-based_2002}.

Symbolic solving of PBESs and parity games was studied in the context of LTSmin~\cite{kant_efficient_2012} and mCRL2~\cite{laveaux_--fly_2022}.
Symbolic model checking with evidence generation has been implemented for (Probabilistic) CTL in NuSMV~\cite{cimatti_nusmv_2002} and PRISM~\cite{kwiatkowska_prism_2011}.\vspace{-0.5em}

\paragraph{Outline.}Sect.~\ref{sec:preliminaries} introduces the necessary background about the $\mu$-calculus, PBESs, evidence generation, and a running example. In Sect.~\ref{sec:improve-evidence-extraction} we introduce a new approach to generate evidence from PBESs and prove its correctness. We evaluate the approach in Sect.~\ref{sec:implementation-evaluation} and conclude in Sect.~\ref{sec:conclusion}.

\section{Preliminaries} \label{sec:preliminaries}
Our work is embedded in the context in which abstract data types are used to describe and reason about data, and we distinguish their syntax and semantics.
We write data sorts with letters $D,E, \ldots$ and the semantics counterpart with $\D, \mathbb{E}, \ldots.$
We require the presence of Booleans and natural numbers along with their usual operators.
We use $B$ to denote Booleans and $N$ to denote natural numbers $\{0,1,2,3, \dots\}$, with semantic counterparts $\mathbb{B}= \{\true,\false\}$ and $\mathbb{N}$ respectively.
For both sorts we use their semantics operation such as $\wedge$ and $+$ also for the syntactic counterparts. We use $\approx$ to syntactically represent equality.
Furthermore, we have a set $\mathcal{D}$ of data variables $d, d_1, \ldots.$
If a term is open we use the \textit{data environment} $\delta$ that maps each variable in $\mathcal{D}$ to a value of the proper semantic domain.
Given a term $t$, the interpretation function, under the context of a data environment $\delta$, is denoted as $\sem{t} \delta$ which is evaluated in the standard way.
We write $\delta[v/d]$ to denote that value $v$ has been assigned to variable $d$, i.e., $\delta[v/d](d') = v$ if $d' = d$, and $\delta[v/d](d') = \delta(d')$ otherwise. 
We assume that every value $v \in \mathbb{D}$ can be represented by a closed term. With a slight abuse of notation, we also use $v$ syntactically for this closed term.
\subsection{Processes}

In this paper, the behaviour of systems is modelled using \emph{linear process equations} (LPEs)~\cite{groote_mousavi_2014}. 
An LPE consists of a single process definition, parameterised with data, and condition-action-effect rules that may refer to local variables. 

\ifnewexample
\begin{definition}\label{def:LPE}
A linear process equation is an equation of the following form:
\[L(d \colon D)
	 = + \{\sum_{e_\alpha \colon E_\alpha} c_\alpha(d,e_\alpha) \rightarrow \alpha \cdot L(g_\alpha(d,e_\alpha)) \mid \alpha \in \act\}\]
where $+$ denotes a non-deterministic choice among the rules, $d \colon D$ is the state, $\alpha \in \act$ is an action label to which we associate local variable $e_\alpha$ of sort $E_\alpha$; $c_\alpha(d,e_\alpha)$ is a condition, and term $g_\alpha(d, e_\alpha)$ describes the next state.
\end{definition}
An LPE represents the (non-deterministic) choice to perform action $\alpha \in \act$ from a state represented by $d$, if condition $c_\alpha(d,e_\alpha)$ evaluates to $\true$ for some value $e_\alpha$, which when executed updates the state to $g_\alpha(d,e_\alpha)$. 
\else
\begin{definition}\label{def:LPE}
A linear process equation is an equation of the following form:
\[L(d \colon D)
	 = \sum_{a \in \act} \sum_{e_a \colon E_a} c_a(d,e_a) \rightarrow a(f_a(d,e_a)) \cdot L(g_a(d,e_a))\]
where $d \colon D$ is the state, $a \in \act$ is an action label associated with local variable $e_a$ of sort $E_a$, condition $c_a(d,e_a)$, term $f_a(d,e_a)$ which is the parameter of action $a$, and term $g_a(d, e_a)$ which describes the next state. 
\end{definition}
An LPE represents the (non-deterministic) choice to perform action $a(f_a(d,e_a))$ from a state represented by $d$, if condition $c_a(d,e_a)$ evaluates to $\true$ for some value $e_a$, which when executed updates the state to $g_a(d,e_a)$.
\fi

We typically write $L$ instead of $L(d \colon D)$ when referring to an LPE and omit the sum when there is no local variable.
The semantics of an LPE, with a closed term $e$ as initial state, is a labelled transition system (LTS) denoted by $L(e)$.

\ifnewexample
\begin{example}\label{exa:running-example}
As a running example we consider a system whose behaviour is modelled by LPE $L$ (left), where $a,b,c \in \act$, and its associated LTS (right), assuming the constant $M \approx 3$:

\noindent
\parbox[b]{0.65\textwidth}{
\begin{mcrl2equations}
    L(s: N) &=&\sum_{n: N} (s \approx 1 \wedge 0 < n < M) \to  a . L(s+n)\\
            &+& \sum_{n:N} (0 < n < s < M) \to b . L(s-n)\\
            &+& (s \approx M) \to c . L(s)
\end{mcrl2equations}
}
\parbox[b]{0.35\textwidth}{
    \begin{center}
    \begin{tikzpicture}[node distance=30pt]
        \node [initial above,initial distance=10pt,state] (1) {$1$};
        \node [state, left= of 1] (2) {$2$};
        \node [state, right= of 1] (3) {$3$};
        \draw[->] 
        (3) edge[loop right] node[right] {$c$} (3)
        (1) edge node {$a$} (3)
        (1) edge[bend left] node[below] {$a$} (2)
        (2) edge[bend left] node[above] {$b$} (1)
;
    \end{tikzpicture}
    \end{center}\qedhere
    }
\end{example}
\else
\begin{example}\label{exa:running-example}
As a running example we consider a system whose behaviour modelled by LPE $L$, where $a,b \in \act$.

\begin{mcrl2equations}
    L(s: N) &=&\sum_{n: N} (s \approx 1 \wedge n < 5) \to a(n) . L(1) \\
    &+& b . L(2);
\end{mcrl2equations}
The behaviour of the system is described by LTS $L(1)$.
    \begin{center}
    \begin{tikzpicture}[node distance=50pt]
        \node [state] (1) {$1$};
        \node [state, right= of 1] (2) {$2$};
        \draw[->] 
        (2) edge[loop above] node {$b$} (2)
        (1) edge node {$b$} (2)
        (1) edge[in=90,out=30,loop,looseness=12] node[right] {$a(0)$} (1)
        (1) edge[in=150,out=90,loop,looseness=12] node[above] {$a(1)$} (1)
        (1) edge[in=-150,out=150,loop,looseness=12] node[left] {$a(2)$} (1)
        (1) edge[in=-30,out=-90, loop,looseness=12] node[right] {$a(4)$} (1)
        (1) edge[in=-90,out=-150,loop,looseness=12] node[below] {$a(3)$} (1);
    \end{tikzpicture}
    \end{center}
\end{example}
\fi

\subsection{Modal $\mu$-calculus}
In this paper we consider requirements expressed in the  modal $\mu$-calculus~\cite{kozen_results_1983}.
\ifnewexample

\else
In this paper we use standard extension of the $\mu$-calculus where sets of actions may appear in the modalities instead of single actions~\cite{bradfield_local_1992}.
\fi

\ifnewexample
\begin{definition}\label{def:mu-calculus-formulae}
    A $\mu$-calculus formula $\varphi$ is defined by the following grammar:
    \[
    \varphi  ::= b \mid Y \mid \varphi_1 \wedge \varphi_2 \mid \varphi_1 \vee \varphi_2 \mid [\alpha]\varphi \mid  \langle \alpha \rangle \varphi \mid \sigma X . \varphi 
    \]
    where $b$ is a Boolean constant, $X, Y \in \mathcal{F}$ are fixpoint variables of some countable set $\mathcal{F}$, $\sigma \in \{\mu, \nu \}$ is a fixpoint, and $\alpha \in \act$ is an action.
\end{definition}
\else
\begin{definition}\label{def:mu-calculus-formulae}
    Formula $\varphi$ is a $\mu$-calculus formula if it adheres to the following grammar:
    \[
    \varphi  ::= b \mid Y \mid \varphi_1 \wedge \varphi_2 \mid \varphi_1 \vee \varphi_2 \mid [K]\varphi \mid  \langle K \rangle \varphi \mid \sigma X . \varphi 
    \]
    where $b$ is a proposition (or its negation) not referring to fixpoint variables, $Y \in \mathcal{F}$ is a fixpoint variable of some countable set $\mathcal{F}$, $\sigma \in \{\mu, \nu \}$ is a fixpoint, and $K \subseteq \act$ as a set of actions.
\end{definition}
Here, the extension of the $\mu$-calculus is such that $\langle K \rangle \varphi$ holds in a state iff there is an $a \in K$ such that $\langle a \rangle \phi$ holds in that state, which is the case if there is an $a$-transition to a state in which $\varphi$ holds. The case for $[K]\varphi$ is analogous.
We typically write $\langle a \rangle \varphi$ ($[a] \varphi$) instead of $\langle \{a\} \rangle \varphi$ ($[\{a\}] \varphi$).
\fi

\ifnewexample
We only consider formulas that are closed. That is, formulas in which no fixpoint variable $Y$ occurs outside the scope of its binder. For instance, $\mu X. \mubox{a}X$ is allowed, but $\mu X. \mubox{a}Y$ is not allowed.
\else
In this paper, we only consider formulas that are closed. That is, formulas in which no fixpoint variables occur freely.
\fi
For the denotational semantics of the $\mu$-calculus we refer to the literature; see for instance~\cite{kozen_results_1983}.
\ifnewexample
\begin{example}\label{exa:running-formula}
Consider the following $\mu$-calculus formula:
\[
  \mu V . (\mudiamond{a} V \vee \mudiamond{b} V \vee \nu W . \mudiamond{c} W).
\]
This formula expresses that there is a finite path of $a$ and $b$ actions that ultimately ends with an infinite sequence of $c$-transitions. Intuitively, this formula holds for our running example: by executing action $a$ to state $3$ and subsequently executing the self-loop in state $3$, such a path is produced.\qedhere
\end{example}
\else
\begin{example}\label{exa:running-formula}
Consider the following first-order $\mu$-calculus formula:
\[
  \mu V . (\mudiamond{\{a,b\}} V \vee \nu W . \mudiamond{b} W).
\]
This formula expresses that there is a path that ultimately ends with infinitely many $b$-transitions. Intuitively, this formula holds for our running example.\qedhere
\end{example}
\fi

\subsection{Parameterised Boolean Equation Systems}

Parameterised Boolean equation systems (PBESs) are systems of fixpoint equations parameterised with data, where the right-hand side is a predicate formula.

\begin{definition}\label{def:pred_formula}\label{def:pbes}
    \emph{Parameterised Boolean equation systems (PBESs)} $\E$ and predicate formulas $\varphi$ are syntactically defined as follows:
    \begin{align*}
        \E &::= \epsilon \mid (\sigma X(d \colon D_\predvars) = \varphi)\,\E \\
        \varphi & ::= b \mid X(e) \mid \varphi_1 \wedge \varphi_2 \mid \varphi_1 \vee \varphi_2 \mid \exists_{d: D} . \varphi \mid \forall_{d: D} . \varphi
    \end{align*}
    where $\epsilon$ is the empty PBES, $\sigma \in \{\mu, \nu\}$ is a fixpoint, $X \in \predvars$ are predicate variables, $d$ are data variables, and $b$ and $e$ are terms over data variables, where $b$ is of sort $B$.    
\end{definition}
For equation $\sigma X(d \colon D_\predvars) = \varphi$, we write $d_X$ and $\varphi_X$ to denote parameter $d$ and predicate formula $\varphi$, respectively.
The set $\bnd(\E)$ is the set of \textit{bound} predicate variables occurring at the left-hand side of the equations in $\E$.
We denote the set of predicate variables \textit{occurring} in formula $\varphi$ with $\occ(\varphi)$, and the predicate variables occurring in the right-hand sides of $\E$ with $\occ(\E)$.
A PBES $\pbes$ is \emph{well-formed} if it has exactly one defining equation for each $X \in \bnd(\pbes)$. It is \emph{closed} if for every $X$, $\occ(\varphi_X) \subseteq \bnd(\pbes)$, and the only free data variable in $\varphi_X$ is $d_X$.

\ifnewexample
\begin{example}\label{exa:core-pbes-running}
    Following the encoding of~\cite{groote_model-checking_2005}, the following PBES encodes whether $L(1)$ of Example~\ref{exa:running-example} satisfies the $\mu$-calculus formula of Example~\ref{exa:running-formula}:
    \begin{mcrl2equations}
     (\,   \mu X(s \colon N) &=& \exists_{n: N}. (s \approx 1 \wedge 0 < n < 3 \wedge X(s+n) ) \\ 
                         &\vee& \exists_{n: N}. (0 < n < s < 3 \wedge X(s-n) ) \\
                         &\vee& Y(s) \,)\\
     (\,   \nu Y(s \colon N) &=& s \approx 3 \wedge Y(s)\,) \\
    \end{mcrl2equations}\qedhere
\end{example}
\else
\begin{example}\label{exa:core-pbes-running}
    The following PBES is an encoding of the model checking problem $L \models \phi$,  with $L$ the LPE from Example~\ref{exa:running-example}, and $\phi$ the formula from Example~\ref{exa:running-formula}.
    \begin{mcrl2equations}
        \mu X(s \colon N) &= (\exists_{n: N}. (s \approx 1 \wedge n <5 \wedge X(1)) \vee X(2) \vee Y(s)\\
        \nu Y(s \colon N) &= Y(2)
    \end{mcrl2equations}
\end{example}
\fi
Predicate formulas are interpreted in the context of a predicate environment $\eta$ and data environment $\delta$; see Table~\ref{tab:sem-pred-formula} for details.

\begin{table}[!b]
\caption{The interpretation function $\sem{\varphi}\eta \delta$ of predicate formula $\varphi$ is its truth assignment in the context of $\delta$ and $\eta \colon \predvars \to 2^\D$, data and predicate environments.} \label{tab:sem-pred-formula}
    \centering
    \begin{tabular}{l}
        \hline
        \begin{tabular}{l l p{0.1cm} l l}
         $\sem{b}\eta \delta$&= $\sem{b} \delta$ & &
         $\sem{X(e)} \eta \delta$ &= $\eta(X)(\sem{e} \delta)$\\
         $\sem{\varphi \wedge \psi} \eta \delta$ &= $\sem{\varphi} \eta \delta$ \text{ and } $\sem{\psi} \eta \delta$ &  &
         $\sem{\varphi \vee \psi} \eta \delta$ &= $\sem{\varphi} \eta \delta$ \text{ or } $\sem{\psi} \eta \delta$\\
         $\sem{\exists_{d: D}. \varphi} \eta \delta$ &= \text{ for some } $v \in \D, \sem{\varphi} \eta \delta[v/d]$ & & 
         $\sem{\forall_{d: D}. \varphi} \eta \delta$ &= \text{ for all } $v \in \D, \sem{\varphi} \eta \delta[v/d]$\\
         \end{tabular}\\
         \hline
     \end{tabular}
\end{table}

We define the semantics of PBESs using \emph{proof graphs}.
Given PBES $\E$, $\sig(\E) = \{(X,v) \mid X \in \bnd(\E), v\in \D\}$ denotes the signature of $\E$, where $v \in \D$ is a value taken from the domain underlying the type of $X$. Every predicate variable $X \in \bnd(\E)$ is assigned a \textit{rank};  $\rank_{\E}(X)$ is \textit{even} if and only if $X$ is labelled with a greatest fixpoint, and $\rank_{\E}(X) \leq \rank_{\E}(Y)$ if $X$ occurs before $Y$ in $\E$.

\begin{definition}[\cite{cranen_proof_2013}]\label{def:proof_graph}
    Let $\E$ be a PBES and $\pgG = (V,E)$ be a directed graph, where $V \subseteq \sig(\E)$ and $E \subseteq V \times V$. 
    The graph $\pgG$ is a \textit{proof graph} iff:
    \begin{itemize}
        \item for every $X(v) \in V$ and $\delta$, $\sem{\varphi_X} \eta_{X(v)} \delta[v/d_X]= \true$ with $\eta_{X(v)}(Y)(w) = \true$ iff $\langle X(v), Y(w) \rangle \in E$ for all $Y$; 
        \item for all infinite paths $X_1(v_1) X_2(v_2) \ldots$ through $\pgG$, $\mathrm{min}\{\rank_{\E}(X) \mid X \in V^{\infty}\}$ is even, where $V^{\infty}$ is the set of predicate variables that occur infinitely often in the sequence.
    \end{itemize}    
\end{definition}
The first condition states that if all successors of $X(v) \in V$ in $\pgG =(V,E)$ together yield an environment that makes $\varphi_X$ $\true$ when parameter $d_X$ is assigned value $v$, then $X(v) = \true$. 
The second condition ensures that the graph respects the parity condition typically associated with nested fixpoint formulas.
The semantics of PBES $\E$ is now defined as follows~\cite{cranen_proof_2013}.

\begin{definition}\label{def:solution_PGsem}
The semantics of PBES $\E$ is a predicate environment $\sem {\E}$ such that $\sem{\E} (X)(v)$ is $\true$ iff $X {\in} \bnd(\E)$ and $X(v) {\in} V$ for some proof graph $\pgG {=} (V,E)$.
\end{definition}
We use $\PG(\pbes)$ to refer to a proof graph for PBES $\pbes$.
An explanation of $X(v) = \false$ is given by means of a \textit{refutation graph}, the dual of a proof graph, see~\cite{cranen_proof_2013}. Because of their duality we here outline our theory using proof graphs only.

PBESs are commonly solved by using a process akin to state space exploration to obtain a parity game (or Boolean equation system)~\cite{van_dam_instantiation_2008,kant_efficient_2012,ploeger_verification_2011}, and solving the resulting game.
In practice, this process uses syntactic simplifications to reduce the number of vertices that is generated in the parity game.
A conservative estimate of the number of vertices that need to be explored instead relies on \emph{semantic} dependencies. These are captured by \emph{relevancy graphs}~\cite{liemextraction2023}. A relevancy graph contains a dependency $X(v) \to Y(w)$ if changing the truth value of $Y(w)$ can change the truth value of $\varphi_X[d_X := v]$.
In the definition we write $\eta[ b / X(e) ]$ for the predicate environment satisfying $\eta[ b / X(e) ](Y)(f) = b$ if $X = Y$ and $e = f$, and $\eta[ b / X(e) ](Y)(f) = \eta(Y)(f)$ otherwise. 

\begin{definition}[\cite{liemextraction2023}]\label{def:relevancy-graph} 
    Let $\E$ be a $\PBES$ and $\mathit{RG} = (V, \to)$ be a directed graph, where 
    \begin{itemize}
        \item $V \subseteq \sig(\E)$ is a set of vertices,
        \item $\to \subseteq V \times V$ an edge relation such that for any $X(v) \in V$,
         
        \qquad $X(v) \to Y(w)$ iff

        \qquad $\exists \eta, \delta . \sem{\varphi_X}\eta[\true/ Y(w)]\delta[v / d_X] \neq \sem{\varphi_X}\eta[\false / Y(w)]\delta[v/ d_X]$
    \end{itemize}
    We say that $\mathit{RG} =  (V, \to)$ is a relevancy graph for $X(v)$ iff $X(v) \in V$. 
\end{definition}
In the remainder of the paper, we use the size of the relevancy graph as a proxy for estimating the effort required to solve a PBES.

\ifnewexample
\begin{example}\label{exa:proof-graph}\label{exa:relevancy-graph}
    A proof graph for the PBES in Example~\ref{exa:core-pbes-running} with initial vertex $X(1)$ is shown in  Fig.~\ref{subfig:running-proofgraph}.
    It shows that vertices $X(1)$, $X(3)$ and $Y(3)$ are $\true$, with the numbers above these vertices indicating their ranks, and the edges showing the required dependencies explaining this solution. The corresponding relevancy graph is shown in  Fig.~\ref{subfig:running-relevancygraph}.   \qedhere
    \begin{figure}[h]
    \centering
        \begin{subfigure}{.48\textwidth}\centering
           \begin{tikzpicture}[scale=0.5,
                every node/.style={scale=0.8},
                node distance = 1cm]
        
                \node[label={[yshift=0.1cm] \small 1}] (X1) {$X(1)$};
                \node[label={[yshift=0.1cm] \small 1},right of=X1,xshift=1.2cm] (X3) {$X(3)$};
                \node[label={[yshift=0.1cm] \small 2},right of=X3,xshift=1.2cm] (Y3) {$Y(3)$};      
                \node[draw=none,below of=X1,yshift=-0.5cm] (bogus) {\phantom{X(2)}};
                
                \path[->]
                (Y3) edge[loop below] node {} (Y3)
                (X1) edge[draw=none,bend right] (X3)
                (X1) edge (X3)
                (X3) edge (Y3)
                ;
            \end{tikzpicture}
            \caption{Proof graph}
            \label{subfig:running-proofgraph}
        \end{subfigure}
        \begin{subfigure}{.5\textwidth}\centering
           \begin{tikzpicture}[scale=0.5,
                every node/.style={scale=0.8},
                node distance = 1cm]
        
                \node (X1) {$X(1)$};
                \node[below of=X1,yshift=-0.5cm] (X2) {$X(2)$};
                \node[right of=X1,xshift=1.2cm] (X3) {$X(3)$};
                \node[left of=X1,xshift=-1.2cm] (Y1) {$Y(1)$};
                \node[left of=X2,xshift=-1.2cm] (Y2) {$Y(2)$};      
                \node[right of=X3,xshift=1.2cm] (Y3) {$Y(3)$};    
                
                \path[->]
                (Y3) edge[loop below] node {} (Y3)
                (X1) edge (Y1)
                (X1) edge[bend right] (X2)
                (X2) edge[bend right] (X1)
                (X1) edge (X3)
                (X2) edge (Y2)
                (X3) edge (Y3)
                ;
            \end{tikzpicture}
            \caption{Relevancy graph}
            \label{subfig:running-relevancygraph}
        \end{subfigure}
% \hfill
        \caption{Proof graph and relevancy graph for the PBES in Example~\ref{exa:core-pbes-running}.}
\end{figure}
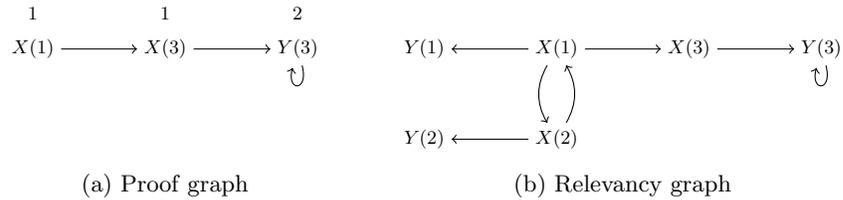
\end{example}
\else
\begin{example}\label{exa:proof-graph}
    A proof graph for the PBES in Example~\ref{exa:core-pbes-running} with initial vertex $X(1)$ is shown in  Fig.~\ref{subfig:running-proofgraph}.
    It shows that $X(1)$, $X(2)$ and $Y(2)$ are $\true$, including the relevant dependencies.  The corresponding relevancy graph is shown in  Fig.~\ref{subfig:running-relevancygraph}.  
    \begin{figure}[h]
    \centering
        \begin{subfigure}{.4\textwidth}\centering
            \begin{tikzpicture}[scale=0.5,
                every node/.style={scale=0.8},
                node distance = 1.5cm]
        
                \node (X1) {$X(1)$};
                \node[right of=X1,xshift=2cm] (X2) {$X(2)$};
                \node[above of=X1] (Y1) {$Y(1)$};
                \node[above of=X2] (Y2) {$Y(2)$};      
                
                \path[->]
                (X1) edge[out=70,in=45,loop] node {} (X1)
                (X2) edge[out=70,in=45,loop] node {} (X2)
                (Y2) edge[out=-70,in=-45,loop] node {} (Y2)
                (X1) edge node {} (Y1)
                (X1) edge node {} (X2)
                (X2) edge node {} (Y2)
                (Y1) edge node {} (Y2)
                ;
            \end{tikzpicture}
            \caption{Relevancy graph}
            \label{subfig:running-relevancygraph}
        \end{subfigure}
% \hfill
        \begin{subfigure}{.4\textwidth}\centering
            \begin{tikzpicture}[scale=0.5,
                every node/.style={scale=0.8},
                node distance = 1.5cm]
            
                \node[label={[yshift=-1cm] \small 1}] (X1) {$X(1)$};
                \node[label={[yshift=-1cm] \small 2},right of= X1, xshift=2cm] (X2) {$X(2)$};
                \node[label={\small 2},above of= X2] (Y2) {$Y(2)$};
                
                \path[->]
                (X1) edge node {} (X2)
                (X2) edge node {} (Y2)
                (Y2) edge[out=-70,in=-45,loop] node {} (Y2)
                ;
            \end{tikzpicture}
            \caption{Proof graph}
            \label{subfig:running-proofgraph}
        \end{subfigure}
        \caption{Relevancy graph and proof graph for the PBES in Example~\ref{exa:core-pbes-running}.}
\end{figure}
\end{example}
\fi

\subsection{Model Checking and Evidence Generation}
A $\mu$-calculus model checking problem $L(e) \models \varphi$ can be encoded into a PBES using the translation proposed by Wesselink and Willemse~\cite{wesselink_evidence_2018}. A proof graph extracted from a PBES obtained by this encoding allows for generating evidence, in contrast to the encoding of~\cite{groote_model-checking_2005}.
\ifnewexample
\begin{table}[t!]
\caption{Translation to encode, given LPE $L$ and  $\sigma Z .\, \varphi$, the model checking problem $L(e) \models \sigma Z .\, \varphi$ into a PBES~\cite{wesselink_evidence_2018}.}
    \centering
    \begin{tabular}{p{0.2\linewidth} p{0.8\linewidth}}
        \hline
        $\Ec_L(b)$ &= $\epsilon$ \\
        $\Ec_L(Y)$  &= $\epsilon$ \\
        $\Ec_L(\varphi \oplus \psi)$  &= $\Ec_L(\varphi)\,\Ec_L(\psi)$
            \hspace{2cm} \text{for $\oplus \in \{ \lor, \land \}$} \\
        $\Ec_L(\mubox{\alpha}\varphi)$  &= $\Ec_L(\varphi)$ \\
        $\Ec_L(\mudiamond{\alpha}\varphi)$  &= $\Ec_L(\varphi)$ \\
        $\Ec_L(\sigma X .\, \varphi)$  &= $(\sigma X(d_L \colon D_L) = \rhsc_L(\varphi))\,\Ec_L(\varphi)$ \\
        $\rhsc_L(b)$ &= $b$\\
        $\rhsc_L(Y) $&= $Y(d_L) $\\
        $\rhsc_L(\varphi \oplus \psi)$ &= $\rhsc_L(\varphi) \oplus \rhsc_L(\psi) $
            \hspace{2cm} \text{for $\oplus \in \{  \lor, \land \}$} \\
        $\rhsc_L([\alpha]\varphi)$ &=  $\forall_{e_\alpha : E_\alpha} .\, c_\alpha(d,e_\alpha) 
        \implies ((\rhsc_L(\varphi)[g_\alpha(d,e_\alpha)/d] \wedge \Zalphaplus(d,g_\alpha(d,e_\alpha)))$ \\& \hspace{11em} 
        ${} 
        \vee \Zalphaminus(d,g_\alpha(d,e_\alpha)))$\\
        $\rhsc_L(\langle \alpha \rangle \varphi)$ &= $\exists_{e_\alpha : E_\alpha} .\, c_\alpha(d,e_\alpha) 
        \wedge ((\rhsc_L(\varphi)[g_\alpha(d,e_\alpha)/d] \vee \Zalphaminus(d,g_\alpha(d,e_\alpha)))$ \\& \hspace{11em} 
        ${} \wedge \Zalphaplus(d,g_\alpha(d,e_\alpha)))$ \\
        $\rhsc_L(\sigma X.\, \varphi)$ &=  $X(d_L)$\\
        \hline
    \end{tabular}
    \label{tab:table_Ec}
\end{table}
\else
\begin{table}[t!]
\caption{Translation to encode, given LPE $L$ and let $\sigma X(d \colon D = e') . \varphi$, the model checking problem $L(e) \models \sigma X(d \colon D = e') . \varphi$ into a PBES.}
    \centering
    \begin{tabular}{p{0.3\linewidth} p{0.7\linewidth}}
        \hline
        $\Ec_L(b)$ &= $\epsilon$ \\
        $\Ec_L(X(e))$  &= $\epsilon$ \\
        $\Ec_L(\varphi \oplus \psi)$  &= $\Ec_L(\varphi)\Ec_L(\psi)$
            \hspace{2cm} \text{for $\oplus \in \{ \lor, \land \}$} \\
        $\Ec_L(\mathsf{Q} d \colon D . \varphi)$  &= $\Ec_L(\varphi) $
            \hspace{2.95cm} \text{for $\mathsf{Q} \in \{ \exists, \forall \}$} \\
        $\Ec_L(\mubox{a}\varphi)$  &= $\Ec_L(\varphi)$ \\
        $\Ec_L(\mudiamond{a}\varphi)$  &= $\Ec_L(\varphi)$ \\
        $\Ec_L(\sigma X(d \colon D = e') . \varphi)$  &= $(\sigma X(d_L \colon D_L, d \colon D) = \rhsc_L(\varphi))\Ec_L(\varphi)$ \\
        $\rhsc_L(b)$ &= $b$\\
        $\rhsc_L(X(e)) $&= $X(d_L, e) $\\
        $\rhsc_L(\varphi \oplus \psi)$ &= $\rhsc_L(\varphi) \oplus \rhsc_L(\psi) $
            \hspace{2cm} \text{for $\oplus \in \{  \lor, \land \}$} \\
        $\rhsc_L(\mathsf{Q} d \colon D . \varphi)$ &= $\mathsf{Q} d \colon D . \rhsc_L(\varphi) $
        \hspace{2.85cm} \text{for $\mathsf{Q} \in \{ \exists, \forall \}$} \\
        $\rhsc_L([\alpha]\varphi)$ &=  $\bigwedge_{a \in \act} \forall_{e_a : E_a} . (c_a(d,e_a) \wedge \match(a(f_a(d,e_a)), \alpha))$  \\ & \sQuad $\implies ((\rhsc_L(\varphi)[g_a(d,e_a)/d] \wedge \Zaplus(d,f_a(d,e_a),g_a(d,e_a)))$ \\& \qquad $\vee \Zaminus(d,f_a(d,e_a),g_a(d,e_a)))$\\
        $\rhsc_L(\langle \alpha \rangle \varphi)$ &= $\bigvee_{a \in \act} \exists_{e_a : E_a} . c_a(d,e_a) \wedge \match(a(f_a(d,e_a)), \alpha)$ \\ & \quad $ \wedge ((\rhsc_L(\varphi)[g_a(d,e_a)/d] \vee \Zaminus(d,f_a(d,e_a),g_a(d,e_a)))$ \\& \qquad $\wedge \Zaplus(d,f_a(d,e_a),g_a(d,e_a)))$ \\
        $\rhsc_L(\sigma X(d \colon D = e') . \varphi)$ &=  $X(d_L, e')$\\
        $\match(a(v), b)$ &=  $b$\\
        $\match(a(v), a'(e'))$ &= $(v = e') \wedge (a = a')$\\
        $\match(a(v), \neg \alpha)$ &= $\neg \match(a(v), \alpha)$\\
        $\match(a(v), \alpha \wedge \beta)$ &=  $\match(a(v), \alpha) \wedge \match(a(v), \beta)$ \\
        $\match(a(v), \forall d\colon D . \alpha)$ & = $\forall d\colon D .\match(a(v), \alpha)$\\
        \hline
    \end{tabular}
    \label{tab:table_Ec}
\end{table}
\fi
The translation scheme of the encoding $\Ec_L$ from~\cite{wesselink_evidence_2018} is in Table~\ref{tab:table_Ec}.
\ifnewexample
The predicate variables $\Zalphaplus$ and $\Zalphaminus$ in the right-hand sides $\rhsc_L([\alpha] \varphi)$ and $\rhsc_L(\langle \alpha \rangle \varphi)$ contain information about the action labels, the \textit{transitions}. 
In particular, in a refutation graph, a dependency on $\Zalphaminus$ indicates the $\alpha$-transition is involved in the $\false$ solution, whereas in a proof graph, a dependency on $\Zalphaplus$ indicates the $\alpha$-transition is required for a $\true$ solution.
\else
The predicate variables $\Zaplus$ and $\Zaminus$ in the right-hand sides $\rhsc_L([\alpha] \varphi)$ and $\rhsc_L(\langle \alpha \rangle \varphi)$ contain information about the action labels, the \textit{transitions}. 
In particular, $\Zaminus$ means that there is no transition matching $a$ leading to a state satisfying $\varphi$, while $\Zaplus$ means that a transition exists.
\fi
\ifnewexample
\else
Since right-hand sides $\rhsc_L([\alpha] \varphi)$ and $\rhsc_L(\langle \alpha \rangle \varphi)$ can become very large they are often simplified during construction: if $\match(a(f_a(d, e_a)), \alpha)$ is not satisfiable, the whole clause corresponding to action label $a$ is removed.
\fi

\ifnewexample
In addition to the encoding of Table~\ref{tab:table_Ec}, for each action label $\alpha$ equations $\nu \Zalphaplus(d\colon D, d' \colon D)=\true$ and $\mu \Zalphaminus(d\colon D,d ' \colon D) = \false$ are added to the equation system.
These equations are solved and typically grouped at the end of the equation system. We write $\Zplus \subseteq \predvars$ (resp.\ $\Zminus \subseteq \predvars$) for the set of predicate variables $\{ \Zalphaplus \mid \alpha \in \act \}$ (resp.\ $\{ \Zalphaminus \mid \alpha \in \act \}$).
\else
In addition to the encoding of Table~\ref{tab:table_Ec}, for each action label $a$ equations $\nu \Zaplus(d\colon D,d_a \colon D_a ,d' \colon D)=\true$ and $\mu \Zaminus(d\colon D,d_a \colon D_a ,d' \colon D) = \false$ are added to the equation system.
These equations are solved and typically grouped at the end of the equation system. We write $\Zplus \subseteq \predvars$ ($\Zminus \subseteq \predvars$) for the set of predicate variables $\{ \Zaplus \mid a \in \act \}$ ($\{ \Zaminus \mid a \in \act \}$).
\fi

\ifnewexample
\begin{theorem}[\cite{wesselink_evidence_2018}]\label{thm:PBES}\label{thm:PBES-C} 
    Let $L(e)$ be an LPE, and $\sigma Z.\, \varphi$ be a closed $\mu$-calculus formula. Then, $L(e) \models \sigma Z.\, \varphi$ if and only if $\sem{\Ec_L(\sigma Z.\, \varphi)}(Z)(\sem{e}) = \true$.
\end{theorem}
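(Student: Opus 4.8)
The plan is to reduce the claim to the correctness of the plain $\mu$-calculus-to-PBES encoding of Groote and Willemse~\cite{groote_model-checking_2005} by eliminating the auxiliary predicate variables in $\Zplus$ and $\Zminus$. Recall that $\Ec_L(\sigma Z.\, \varphi)$ is the concatenation of a block of \emph{core} equations $\pbesEquations$ produced by the recursion of Table~\ref{tab:table_Ec}, followed by the equations $\nu\Zalphaplus(d\colon D,d'\colon D)=\true$ and $\mu\Zalphaminus(d\colon D,d'\colon D)=\false$ for every $\alpha\in\act$. The first observation is that, because the right-hand sides of these auxiliary equations contain no predicate variables, each $\Zalphaplus$ has the constant solution $\true$ and each $\Zalphaminus$ the constant solution $\false$, irrespective of the surrounding equations: the graph whose vertices are all pairs $\Zalphaplus(v,w)$, each carrying a self-loop, is a proof graph (the first condition of Definition~\ref{def:proof_graph} holds since the right-hand side is $\true$, and the second holds since a greatest-fixpoint variable has even rank), so $\sem{\Ec_L(\sigma Z.\, \varphi)}(\Zalphaplus)(v,w)=\true$; and no proof graph can contain a vertex $\Zalphaminus(v,w)$, since the first condition would require $\sem{\false}\delta=\true$, whence $\sem{\Ec_L(\sigma Z.\, \varphi)}(\Zalphaminus)(v,w)=\false$.

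Second, I would prove a substitution step: for every core predicate variable $X$ and every value $v$,
\[
  \sem{\Ec_L(\sigma Z.\, \varphi)}(X)(v)\;=\;\sem{\pbescorefull}(X)(v),
\]
where $\pbescorefull$ is the core block with $\true$ substituted for every occurrence of a variable in $\Zplus$ and $\false$ for every occurrence of a variable in $\Zminus$. For ``$\Leftarrow$'', from a proof graph of $\pbescorefull$ one obtains a proof graph of $\Ec_L(\sigma Z.\, \varphi)$ by adjoining, for each $\Zalphaplus$-conjunct that a right-hand side requires to hold, the self-looping vertex $\Zalphaplus(v,w)$ of the first step; these vertices are sinks of even rank, so the parity condition on infinite paths is undisturbed. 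For ``$\Rightarrow$'', from a proof graph of $\Ec_L(\sigma Z.\, \varphi)$ one deletes all $\Zplus$-vertices (no $\Zminus$-vertex occurs, by the first step); the first condition of Definition~\ref{def:proof_graph} is preserved at the remaining vertices because the deleted edges precisely witnessed the conjuncts $\Zalphaplus(\cdot,\cdot)$ that are hard-wired to $\true$ in $\pbescorefull$, and the parity condition is inherited since the ranks of the core variables are unchanged. Because no variable of $\Zplus\cup\Zminus$ occurs in the right-hand side of another such variable (or of itself), this substitution may be carried out in a single pass.

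Third, I would simplify the right-hand sides of $\pbescorefull$: after the substitution, $\rhsc_L(\mubox{\alpha}\varphi)$ becomes $\forall_{e_\alpha\colon E_\alpha}.\,c_\alpha(d,e_\alpha)\implies((\rhsc_L(\varphi)[g_\alpha(d,e_\alpha)/d]\wedge\true)\vee\false)$, which is logically equivalent to $\forall_{e_\alpha\colon E_\alpha}.\,c_\alpha(d,e_\alpha)\implies\rhsc_L(\varphi)[g_\alpha(d,e_\alpha)/d]$, and dually $\rhsc_L(\mudiamond{\alpha}\varphi)$ becomes $\exists_{e_\alpha\colon E_\alpha}.\,c_\alpha(d,e_\alpha)\wedge\rhsc_L(\varphi)[g_\alpha(d,e_\alpha)/d]$; all remaining clauses of the recursion of Table~\ref{tab:table_Ec} already coincide with those of the plain encoding. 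Since the first condition of Definition~\ref{def:proof_graph} depends on a right-hand side only through its interpretation (and the second condition and the ranks do not depend on right-hand sides at all), replacing right-hand sides by logically equivalent ones does not change the proof graphs, hence not the solution of a PBES; so a straightforward induction on $\varphi$ shows that $\pbescorefull$ has the same solution as the Groote--Willemse encoding $\mathcal{E}^{\mathrm{GW}}_L(\sigma Z.\, \varphi)$ of $L(e)\models\sigma Z.\, \varphi$. By correctness of that encoding~\cite{groote_model-checking_2005}, $\sem{\mathcal{E}^{\mathrm{GW}}_L(\sigma Z.\, \varphi)}(Z)(\sem{e})=\true$ iff $L(e)\models\sigma Z.\, \varphi$; chaining the three steps yields the theorem. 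The cases $\sigma=\mu$ and $\sigma=\nu$ need no separate treatment, since the rank of $Z$ depends only on $\sigma$ and enters the argument only via the parity condition.

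I expect the substitution step to be the main obstacle: one must argue carefully, at the level of proof graphs, that grafting the constant $\Zplus$-subgraphs onto a proof graph and pruning them off again preserves the min-parity condition on \emph{all} infinite paths. This works precisely because each $\Zalphaplus$-vertex is a sink carrying a self-loop of even rank, so any infinite path entering the $\Zplus$-region is eventually constant on an even-rank vertex and no infinite path ever leaves it; and because the $\Zplus$- and $\Zminus$-equations are mutually (and self-) independent, the substitution is well defined without any fixpoint iteration. An alternative that avoids the reduction altogether is to relate proof graphs of $\Ec_L(\sigma Z.\, \varphi)$ directly to winning strategies of the existential player in the parity game underlying $L(e)\models\sigma Z.\, \varphi$, but that essentially reproves the Groote--Willemse result and is longer.
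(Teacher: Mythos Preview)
The paper does not give its own proof of this theorem: it is stated with a citation to~\cite{wesselink_evidence_2018} and no proof environment follows. So there is no in-paper argument to compare against.

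That said, your plan is sound and is in fact the route the present paper implicitly relies on elsewhere: the lemma $\sem{\pbes}=\sem{\pbescore}$ in Section~\ref{sec:solve-core} is exactly your substitution step (the paper attributes it to standard PBES results~\cite{groote_parameterised_2004}), and Example~\ref{exa:core-pbes-running} records that $\pbescore$ coincides with the Groote--Willemse encoding~\cite{groote_model-checking_2005}. Your three-step reduction---solve the $\Zplus/\Zminus$ equations to constants, substitute, identify the result with the plain encoding---is therefore the expected argument, and the monotonicity of predicate formulas in the predicate environment is what makes the proof-graph surgery in the ``$\Rightarrow$'' direction go through. One small wrinkle: you call the adjoined $\Zalphaplus$-vertices ``sinks carrying a self-loop'', which is contradictory; either description works (the right-hand side $\true$ needs no successors, and a self-loop on an even-rank vertex is harmless), but pick one. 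Also, for the ``$\Rightarrow$'' direction you should not assume the given proof graph has any particular shape on its $\Zplus$-vertices; simply delete them and appeal to monotonicity to see that the substituted right-hand sides remain $\true$ under the restricted environment.
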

\else
\begin{theorem}\label{thm:PBES}\cite{wesselink_evidence_2018}\label{thm:PBES-C} 
    Let $L(e)$ be an LPE, and $\sigma Z(d \colon D := e') . \varphi$ be a closed and normalised first-order $\mu$-calculus formula. Then, $L(e) \models \sigma Z(d \colon D := e') . \varphi$ if and only if $(\sem{e}, \sem{e'} )\in \sem{\Ec_L(\sigma Z(d \colon D := e') . \varphi)}(Z)$.
\end{theorem}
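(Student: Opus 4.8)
The plan is to factor the statement through the standard (``core'') $\mu$-calculus-to-PBES encoding of~\cite{groote_model-checking_2005}, which already captures $L(e)\models\sigma Z.\varphi$, and to argue that the extra bookkeeping predicate variables $\Zplus$ and $\Zminus$ added by Table~\ref{tab:table_Ec} do not affect the solution. Concretely, I read $\Ec_L(\sigma Z.\varphi)$ as the full closed PBES: the equation block $\E_L$ produced by Table~\ref{tab:table_Ec} together with the trivial equations $\nu\Zalphaplus(\ldots)=\true$ and $\mu\Zalphaminus(\ldots)=\false$ for every $\alpha\in\act$. First I would substitute $\true$ for each $\Zplus$-variable and $\false$ for each $\Zminus$-variable throughout the right-hand sides of $\E_L$ and drop the bookkeeping equations; a short structural induction on $\varphi$ using Table~\ref{tab:sem-pred-formula} then shows the result is, up to semantic equivalence of right-hand sides, exactly the core encoding --- e.g.\ $\rhsc_L([\alpha]\varphi)$ becomes $\forall_{e_\alpha}.\,c_\alpha\implies((\rhsc_L(\varphi)[g_\alpha/d]\wedge\true)\vee\false)$, whose denotation is that of the usual universal clause, $\rhsc_L(\langle\alpha\rangle\varphi)$ collapses dually, and every other case is unchanged (compare the PBES displayed in Example~\ref{exa:core-pbes-running}). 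Since Definition~\ref{def:proof_graph} inspects a right-hand side only through $\sem{\cdot}$, semantically equivalent right-hand sides yield the same PBES solution, so this reduces the claim to: the PBES \emph{with} the bookkeeping equations and the PBES obtained \emph{after} the substitution have the same solution on $Z$.

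The heart of the argument is this ``inertness'' step, which I would establish directly on proof graphs via two observations. (i) A vertex $\Zalphaplus(w)$ can always be added to a proof graph: its right-hand side is $\true$, so local consistency holds with no outgoing edges, and since it is a sink it lies on no infinite path, leaving the parity condition untouched; dually, $\Zalphaminus(w)$ can never occur in a proof graph because $\sem{\false}$ is never $\true$. Hence $\sem{\Ec_L(\sigma Z.\varphi)}(\Zalphaplus)(w)=\true$ and $\sem{\Ec_L(\sigma Z.\varphi)}(\Zalphaminus)(w)=\false$ for all $w$. (ii) Predicate formulas contain no negation, hence are monotone in the predicate environment, so adding edges to a proof graph preserves local consistency. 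Combining (i) and (ii), every proof graph for $\Ec_L(\sigma Z.\varphi)$ may be brought into the normal form that contains every vertex of the form $\Zalphaplus(w)$, all edges into such vertices, and no vertex $\Zalphaminus(w)$. For such a normalised graph, deleting the bookkeeping vertices and their incident edges yields a graph over the signature of the substituted PBES in which, for each remaining $X(v)$, the induced environment sends every $\Zplus$-variable to $\true$ and every $\Zminus$-variable to $\false$; by the routine substitution lemma for $\sem{\cdot}$ (substituting a constant into a formula equals evaluating under the environment that assigns that constant) it is again a proof graph, with the same infinite paths (the deleted vertices were sinks) and the same relative ranks among the non-bookkeeping variables, and conversely any proof graph of the substituted PBES extends to one of $\Ec_L(\sigma Z.\varphi)$ by re-adding all $\Zalphaplus(w)$. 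Thus the two PBESs agree on $\bnd(\E_L)$, in particular on $Z$, and chaining this with the reduction above and the correctness of the core encoding gives both directions of the equivalence.

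The main obstacle I anticipate is not any single step but making the inertness argument watertight against the \emph{proof-graph} formulation of PBES semantics used here rather than the more familiar denotational (nested-fixpoint) one: one has to verify explicitly that adding $\Zalphaplus$-vertices and their incoming edges preserves \emph{both} conditions of Definition~\ref{def:proof_graph} --- monotonicity of predicate formulas is precisely what makes the local-consistency side work --- and that the correspondence of infinite paths, hence of the parity condition, is exact, which hinges on those vertices being sinks and on the $\Zplus/\Zminus$ block sitting at the end so that no ranks shift. A secondary point is that I am using the correctness of the core encoding~\cite{groote_model-checking_2005} as a black box; doing without it amounts to proving the theorem from scratch by structural induction on $\varphi$, converting a $\mu$-calculus witness (a support set / signature of the fixpoint) into a proof graph of $\Ec_L(\sigma Z.\varphi)$ and back --- the same two-phase shape, but with the fixpoint unfolding carried out explicitly.
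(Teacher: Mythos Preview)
The paper does not prove this theorem at all: it is stated with the citation~\cite{wesselink_evidence_2018} and used as an imported result, so there is no proof in the paper to compare your proposal against.

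That said, your proposal is a sound reconstruction of how such a result is typically established, and it is worth noting that the paper itself relies on essentially the same ``inertness'' idea later, in Sect.~\ref{sec:solve-core}: the lemma $\sem{\pbes}=\sem{\pbescore}$ is exactly your claim that substituting $\true$ for $\Zplus$ and $\false$ for $\Zminus$ preserves the solution, and the paper dispatches it in one line by appeal to ``standard results on PBESs''~\cite{groote_parameterised_2004} rather than by the proof-graph normalisation argument you sketch. Your decomposition (reduce to the core encoding via inertness of the bookkeeping variables, then invoke the correctness of the core encoding from~\cite{groote_model-checking_2005}) is therefore entirely in line with how the surrounding literature treats the result; the only difference is that you work it out explicitly at the level of Definition~\ref{def:proof_graph}, whereas both~\cite{wesselink_evidence_2018} and the present paper take the denotational/fixpoint route and treat the substitution step as folklore.

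One minor caution on your proof-graph argument: when you say ``adding edges to a proof graph preserves local consistency'' by monotonicity, this is true, but adding \emph{arbitrary} edges can break the parity condition (an added edge may create a new infinite path with odd minimal rank). Your argument is still fine because the only edges you add target sinks $\Zalphaplus(w)$, so no new infinite paths arise; just be sure to state that restriction explicitly rather than appealing to monotonicity alone.
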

\fi

We usually write $\pbes$ for the PBES obtained from encoding $\Ec_L$.
\ifnewexample
Specifically, $\pbes = \pbesEc$, where $\pbesEquations$ contains the equations introduced by $\Ec_L$ and $\E_{\Zplus}$ (resp.\ $\E_{\Zminus}$) contains all equations of the shape $\nu \Zalphaplus(d\colon D,d' \colon D) = \true$ (resp.\ $\mu \Zalphaminus(d\colon D,d' \colon D) = \false$). 
\else
Specifically, $\pbes = \pbesEc$, where $\pbesEquations$ contains the equations introduced by $\Ec_L$ and $\E_{\Zplus}$ ($\E_{\Zminus}$) contains all equations of the shape $\nu \Zaplus(d\colon D,d_a \colon D_a ,d' \colon D) = \true$ ($\mu \Zaminus(d\colon D,d_a \colon D_a ,d' \colon D) = \false$). 
\fi

\ifnewexample
\begin{example} \label{exa:running-pbesc}
Recall the LPE and $\mu$-calculus formula from Examples~\ref{exa:running-example} and~\ref{exa:running-formula}.
PBES $\pbes$ consists of the following equations.

    \begin{mcrl2equations}
    (\,    \mu X(s \colon N) &=& \exists_{n: N}. (s \approx 1 \wedge 0 < n < 3 \wedge (X(s+n) \vee \Zaminus(s,s+n)) \wedge \Zaplus(s,s+n) )\\ 
                         &\vee& \exists_{n: N}. (0 < n < s < 3 \wedge (X(s-n) \vee \Zbminus(s,s-n)) \wedge \Zbplus(s,s-n) ) \\
                         &\vee& Y(s) \,)\\
    (\,    \nu Y(s \colon N) &=& s \approx 3 \wedge (Y(s) \vee \Zcminus(s,s)) \wedge \Zcplus(s,s)\,) \\
    (\, \nu \Zaplus(s,s1\colon N)&=& \true \,)\,\,\, 
    (\, \nu \Zbplus(s,s1\colon N)= \true \,)~\,\,  
    (\, \nu \Zcplus(s,s1\colon N)= \true \,) \\ 
    (\, \mu \Zaminus(s,s1\colon N)&=& \false \,)\, 
    (\, \mu \Zbminus(s,s1\colon N)= \false \,)\, 
    (\, \mu \Zcminus(s,s1\colon N)= \false \,)

    \end{mcrl2equations}

The following proof graph for the PBES is found:
\begin{center}
           \begin{tikzpicture}[scale=0.5,
                every node/.style={scale=0.8},
                node distance = 1cm]
        
                \node[label={[yshift=0.1cm] \small 1}] (X1) {$X(1)$};
                \node[label={[yshift=0.1cm] \small 2},left of=X1,xshift=-1cm] (Zaplus) {$\Zaplus(1,3)$};
                \node[label={[yshift=0.1cm] \small 1},right of=X1,xshift=1.7cm] (X3) {$X(3)$};
                \node[label={[yshift=0.1cm] \small 2},right of=X3,xshift=1.7cm] (Y3) {$Y(3)$};      
                \node[label={[yshift=0.1cm] \small 2},right of=Y3,xshift=1cm] (Zcplus) {$\Zcplus(3,3)$};
                
                \path[->]
                (Y3) edge[loop below] node {} (Y3)
                (X1) edge (X3)
                (X3) edge (Y3)
                (X1) edge (Zaplus)
                (Y3) edge (Zcplus)
                ;
            \end{tikzpicture}
\end{center}
Predicate variables $\Zaplus$ and $\Zcplus$ encode information about which $a$-transitions and $c$-transitions in the LPE are involved in proving that the solution to the model checking problem is $\true$.
We filter the relevant vertices with information about evidence from the proof graph, here $\Zaplus(1,3)$ and $\Zcplus(3,3)$, and derive the following LPE (left) and \emph{witness} LTS (right):

\noindent
\parbox[t]{.5\textwidth}{
\begin{mcrl2equations}
    L_w(s: N) &=&
    (s \approx 1) \to  a . L_w(3)\\
            &+& (s \approx 3) \to c . L_w(3)
\end{mcrl2equations}
}
\parbox[t]{.5\textwidth}{
\begin{center}
    \begin{tikzpicture}[node distance=30pt]
        \node [initial left, initial distance=10pt, state] (1) {$1$};
        \node [state, right= of 1] (3) {$3$};
        \draw[->] 
        (3) edge[loop right] node[right] {$c$} (3)
        (1) edge node {$a$} (3)
;
    \end{tikzpicture}
\end{center}
}

We remark that, by construction, this LTS is a subgraph of the LTS in Example~\ref{exa:running-example}, underlying the original specification. 
For the formal definition of witness and counterexample we refer to~\cite{wesselink_evidence_2018}. \qedhere
\end{example}
\else
\begin{example} \label{exa:running-pbesc}
Recall the LPE and $\mu$-calculus formula from Examples~\ref{exa:running-example} and~\ref{exa:running-formula}.
PBES $\pbes$ consists of the following six equations.

\begin{mcrl2equations}
    \mu X(s \colon N)&=(\exists_{n : N}. (s \approx 1 \wedge n < 5) \wedge (X(1) \vee \Zaminus(s,n,1)) \wedge \Zaplus(s,n,1)) \\
    & \qquad \vee ((X(2) \vee \Zbminus(s,2)) \wedge \Zbplus(s,2)) \vee Y(s)\\
    \nu Y(s\colon N)&= (Y(2) \vee \Zbminus(s,2)) \wedge \Zbplus(s,2)\\
    \nu \Zaplus(s,v,s1\colon N)&= \true \\
    \nu \Zbplus(s,s1\colon N)&= \true \\ 
    \mu \Zaminus(s,v,s1\colon N)&= \false\\
    \mu \Zbminus(s,s1\colon N)&= \false\\
\end{mcrl2equations}

The following proof graph for the PBES is found:
\begin{center}
    \begin{tikzpicture}[scale=0.5,
        every node/.style={scale=0.8},
        node distance = 1.5cm]
    
        \node[label={[yshift=-1cm] \small 1}] (X1) {$X(1)$};
        \node[label={[yshift=-1cm] \small 1}, right of= X1,xshift=2cm] (X2) {$X(2)$};
        \node[label={[yshift=-1cm] \small 2},left of= X1,xshift=-2cm] (1) {$\Zbplus(1,2)$};
        \node[label={[yshift=-1cm] \small 2},right of= X2,xshift=2cm] (2) {$\Zbplus(2,2)$};
        \node[label={\small 2},above of= X2] (Y2) {$Y(2)$};
    
        \path[->]
        (X1) edge node {} (1)
        (X1) edge node {} (X2)
        (X2) edge node {} (Y2)
        (X2) edge node {} (2)
        (Y2) edge[out=-70,in=-45,loop] node {} (Y2)
        ;
    \end{tikzpicture}
\end{center}

Predicate variables $\Zbplus$ encodes information about the transitions in the LPE involved to prove the solution to the model checking problem being $\true$.
We filter the relevant vertices with information about evidence from the proof graph, here $\Zbplus(1,2)$ and $\Zbplus(2,2)$, and derive the following LPE:
\begin{mcrl2equations}
    L_w(n \colon N) &=& (n \approx 1) \to b . L_w(2)\\
    &+& (n \approx 2) \to b . L_w(2);   
\end{mcrl2equations}
The LTS induced by $L_w(1)$, the \textit{witness}, is as follows: 
\begin{center}
\begin{tikzpicture}[node distance=50pt]
    \node [state, initial] (1) {$1$};
    \node [state, right= of 1] (2) {$2$};
    \draw[->] 
    (2) edge[loop above] node {$b$} (2)
    (1) edge node {$b$} (2)
    ;
\end{tikzpicture}
\end{center}
The LTS is a subgraph of the LTS, in Example~\ref{exa:running-example}, underlying the original specification. 
For the formal definition of witness and counterexample we refer to~\cite{wesselink_evidence_2018}.
\end{example}
\fi

\section{Improving Evidence Generation from PBESs}\label{sec:improve-evidence-extraction}\label{sec:evidence-extraction}

The encoding $\Ec_L$ results in a PBES from which evidence supporting the verdict of the model checking problem can be extracted.
However, the additional information added to the right-hand sides of the equations also significantly increases the effort needed to solve the PBES.
We illustrate this using the relevancy graph of the PBES from Example~\ref{exa:running-pbesc}.

\ifnewexample
\begin{example}\label{exa:relavancy-graph-with-counterexample}
The relevancy graph for PBES $\pbes$ from Example~\ref{exa:running-pbesc} is the following.
\begin{center}
           \begin{tikzpicture}[scale=0.5,
                every node/.style={scale=0.8},
                node distance = 1.5cm]
        
                \node (X1) {$X(1)$};
                \node[above of=X1, xshift=-0.75cm] (Zaplus1_2) {$\Zaplus(1,2)$};
                \node[above of=X1, xshift=-2.25cm] (Zaplus1_3) {$\Zaplus(1,3)$};
                \node[above of=X1, xshift=0.75cm] (Zaminus1_2) {$\Zaminus(1,2)$};
                
                \node[below of=X1] (X2) {$X(2)$};
                \node[right of=X1,xshift=1.2cm] (X3) {$X(3)$};
                \node[above of=X1,xshift=2.25cm] (Zaminus1_3) {$\Zaminus(1,3)$};
                \node[left of=X1,xshift=-1.2cm] (Y1) {$Y(1)$};
                \node[below of=Y1] (Y2) {$Y(2)$};      
                \node[right of=X2, xshift=0.5cm] (Zbplus2_1) {$\Zbplus(2,1)$};
                \node[above of=Zbplus2_1,yshift=-0.7cm] (Zbminus2_1) {$\Zbminus(2,1)$};
                \node[right of=X3,xshift=1.2cm] (Y3) {$Y(3)$};      
                \node[right of=Y3,xshift=0.5cm] (Zcplus3_3) {$\Zcplus(3,3)$};
                \node[below of=Zcplus3_3,yshift=0.7cm] (Zcminus3_3) {$\Zcminus(3,3)$};
                
                \path[->]
                (Y3) edge[loop below] node {} (Y3)
                (X1) edge (Y1)
                (X1) edge (Zaplus1_2)
                (X1) edge (Zaplus1_3)
                (X1) edge (Zaminus1_2)
                (X1) edge (Zaminus1_3)
                (X2) edge (Zbminus2_1)
                (X2) edge (Zbplus2_1)
                (X1) edge[bend right] (X2)
                (X2) edge[bend right] (X1)
                (X1) edge (X3)
                (X2) edge (Y2)
                (X3) edge (Y3)
                (Y3) edge (Zcminus3_3)
                (Y3) edge (Zcplus3_3)
                ;
            \end{tikzpicture}
\end{center}
Note that it contains dependencies on $\Zaplus(1, n)$ and $\Zaminus(1, n)$ for all $n = 2, 3$. By increasing  the value of $M$, used in the LPE, to values larger than $3$,  the number of vertices can be increased to an arbitrary number. For instance, if $M \approx 1000$ then $X(1)$ will have $1998$ dependencies related to action $a$. The number of dependencies related to action $b$ will increase similarly.\qedhere
\end{example}
\else
\begin{example}\label{exa:relavancy-graph-with-counterexample}
The relevancy graph for PBES $\pbes$ from Example~\ref{exa:running-pbesc} is the following.
\begin{center}
    \begin{tikzpicture}[scale=0.5,
        every node/.style={scale=0.8},
        node distance = 1.5cm]

        \node (X1) {$X(1)$};
        \node[right of=X1,xshift=2cm] (X2) {$X(2)$};
        \node[above of=X1] (Y1) {$Y(1)$};
        \node[above of=X2] (Y2) {$Y(2)$};    
        
        \node[left of=X1,xshift=-2cm] (1) {$\Zbplus(1,2)$};
        \node[left of=Y1,xshift=-2cm] (2) {$\Zbminus(1,2)$};
        \node[below of=X1] (3) {$\Zaplus(1,4,1)$};
        \node[left of=3,xshift=-0.1cm] (4) {$\Zaplus(1,3,1)$};
        \node[left of=4,xshift=-0.1cm] (5) {$\Zaplus(1,2,1)$};
        \node[left of=5,xshift=-0.1cm] (6) {$\Zaplus(1,1,1)$};
        \node[left of=6,xshift=-0.1cm] (7) {$\Zaplus(1,0,1)$};

        \node[right of=X2,xshift=2cm] (8) {$\Zbplus(2,2)$};
        \node[right of=Y2,xshift=2cm] (9) {$\Zbminus(2,2)$};
        \node[below of=8, yshift=-0.8cm] (10) {$\Zaminus(1,4,1)$};
        \node[left of=10,xshift=-0.1cm] (11) {$\Zaminus(1,3,1)$};
        \node[left of=11,xshift=-0.1cm] (12) {$\Zaminus(1,2,1)$};
        \node[left of=12,xshift=-0.1cm] (13) {$\Zaminus(1,1,1)$};
        \node[left of=13,xshift=-0.1cm] (14) {$\Zaminus(1,0,1)$};
        
        \path[->]
        (X1) edge node {} (3)
        (X1) edge node {} (4)
        (X1) edge node {} (5)
        (X1) edge node {} (6)
        (X1) edge node {} (7)
        (X1) edge[out=70,in=45,loop] node {} (X1)
        (X1) edge[bend right=-15] node {} (10)
        (X1) edge[bend right=-15] node {} (11)
        (X1) edge[bend right=-25] node {} (12)
        (X1) edge[bend right=-35] node {} (13)
        (X1) edge[bend right=-45] node {} (14)
        (X2) edge[out=70,in=45,loop] node {} (X2)
        (Y2) edge[out=-70,in=-45,loop] node {} (Y2)
        (X1) edge node {} (Y1)
        (X1) edge node {} (X2)
        (X2) edge node {} (Y2)
        (Y1) edge node {} (Y2)
        (Y1) edge node {} (1)
        (Y1) edge node {} (2)
        (X1) edge[] node {} (1)
        (X1) edge node {} (2)
        (X2) edge node {} (8)
        (X2) edge node {} (9)
        (Y2) edge node {} (8)
        (Y2) edge node {} (9)
        ;
    \end{tikzpicture}
\end{center}
Note that it contains dependencies on $\Zaplus(s, n, 1)$ and $\Zaminus(s, n, 1)$ for all $n < 5$. By changing condition $n < 5$ the number of vertices can be increased to an arbitrary number, i.e., if $n< 100$ then $200$ are the dependencies related to $a$.
\end{example}
\fi

Omitting the information from the PBES that is needed to generate evidence would result in the PBES from Example~\ref{exa:core-pbes-running}, whose much smaller relevancy graph was shown in Fig.~\ref{subfig:running-relevancygraph}.
The relevancy graphs of both PBESs illustrate a trade-off. On the one hand, solving the core PBES $\pbescore$ is (much) more efficient than solving $\pbes$. On the other hand, diagnostic information including the transitions is essential for understanding why a formula is (not) satisfied.

In the remainder of this section, we introduce a three-step approach that allows us to efficiently solve PBESs with additional information for evidence generation. We present the approach assuming that the solution to the PBES is $\true$; the case where the solution is $\false$ is similar.
The three steps are as follows. An overview is presented in the figure on the right.

\parbox{.6\textwidth}{
\begin{enumerate}
  \item Remove the additional information from the PBES $\pbes$, and solve the resulting PBES $\pbescore$ (see Sect.~\ref{sec:solve-core}).
  \item Use the solution of $\pbescore$ to remove unneeded evidence information from the PBES, obtaining $\truepbes$ (see Sect.~\ref{sec:calculate-true}).
  \item Combine the proof graph for $\pbescore$ with $\truepbes$ to obtain a new PBES $\combine{\truepbes, \PG(\pbescore)}$, and solve this PBES (see Sect.~\ref{sec:combine}). 
\end{enumerate}
}
\parbox{.4\textwidth}{
\begin{center}
        \begin{tikzpicture}[scale=0.5,
            every node/.style={scale=0.8},
            node distance = 0.5cm]
        
            \node[] (1) {LPE + $\mu$-calculus formula};
            \node[below= of 1] (2) {$\pbes$};
            \node[below= of 2] (3) {$\solve{\pbescore}$};
            \node[below right=0.5cm and -2.4cm of 3] (4) {$\begin{array}{l}\mathit{solve}(\mathit{combine}(\truepbes,\\\phantom{\mathit{solve(combine(}}\PG(\pbescore)))\end{array}$};
            \node[below= of 4] (6) {witness};
             
            \path[->]
            (1) edge node {} (2)
            (2) edge node {} (3)
            (2) edge [bend left=60] node {} (4)
            (3) edge node {} (4)
            (4) edge node {} (6)
            ;
        \end{tikzpicture}
\end{center}
}

The third step results in a solution and proof graph for the original PBES $\pbes$.
In the remainder of this section, we address each of these steps in more detail.

We first introduce some auxiliary notation.
We write $\lambda d_X \colon D_X . \varphi$ lifting predicate formula $\varphi$ to a \emph{predicate function} with the same parameters as predicate variable $X(d_X \colon D_X)$ \cite{orzan_invariants_2010}. 
The semantics is defined as 
$\sem{\lambda d_X \colon D_X . \varphi}\eta \delta =$ $ \lambda v \in \mathbb{D}_X . \sem{\varphi}\eta \delta[v / d_X].$
We use this lifting to substitute a predicate formula for a predicate variable. 
Given predicate formulas $\varphi, \psi$ and predicate variable $X$, we write $\varphi[X := \lambda d_X \colon D_X . \psi]$ to denote that every occurrence of $X$ is replaced with $\lambda d_X \colon D_X . \psi$ in $\varphi$.
We write $\varphi[X := \psi_X, Y := \psi_Y]$ for the simultaneous substitution of $X$ and $Y$ ($X \neq Y$), and generalise this to $\varphi[X := \psi_X]_{X \in \predvars}$ to denote the simultaneous substitution of all $X \in \predvars$.

\subsection{Solving a PBES Without Evidence Information}\label{sec:solve-core}

If we forego evidence, and focus on obtaining a solution for the model checking problem in terms of a \textit{true/false} answer only, the amount of work can be reduced significantly. This motivates the first step in our approach.

A PBES with information about evidence can be simplified by substituting the right-hand sides of solved equations for predicate variables in $\Zplus$ and $\Zminus$. We refer to this as the \emph{core} PBES, defined as follows.
\begin{definition}
Let $\pbes = \pbesEc$. Then
\[
\pbescore = \E_L[\Zaplus := \lambda d \colon D_{\Zaplus} . \true]_{\Zaplus \in \Zplus} [\Zaminus := \lambda d \colon D_{\Zaminus} . \false]_{\Zaminus \in \Zminus} \E_{\Zplus} \E_{\Zminus}
\]
\end{definition}

\begin{example}\label{exa:running-pbes}
    Let PBES $\pbes$ be as in Example~\ref{exa:running-pbesc}. Then $\pbescore$ is obtained from this PBES by replacing the equations for $X$ and $Y$ by the corresponding equations from Example~\ref{exa:core-pbes-running}.
    Note the relevancy graph of the latter (see Fig.~\ref{subfig:running-relevancygraph}) is much smaller than the one for $\pbes$ (see Example~\ref{exa:relavancy-graph-with-counterexample}).\qedhere
\end{example}
It follows immediately from standard results on PBESs~\cite{groote_parameterised_2004} that the solution of the equations in $\pbes_L$ are preserved by transformation $\pbescore$. 
\begin{lemma}
Let $\pbes = \pbesEc$. Then 
$\sem{\pbes} = \sem{\pbescore}$.
\end{lemma}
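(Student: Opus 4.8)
The plan is to show that the substitution defining $\pbescore$ is an instance of a solution-preserving rewrite: replacing a predicate variable throughout a PBES by (a lambda-abstraction of) the right-hand side it is already known to equal. Concretely, the equations for $\Zaplus$ in $\E_{\Zplus}$ are all of the form $\nu \Zaplus(d \colon D_{\Zaplus}) = \true$, and those for $\Zaminus$ in $\E_{\Zminus}$ are $\mu \Zaminus(d \colon D_{\Zaminus}) = \false$. Since these right-hand sides are closed (they contain no predicate variables and no free data variables), their semantics is independent of any environment: $\sem{\Zaplus}(X)(w) = \true$ and $\sem{\Zaminus}(X)(w) = \false$ for every $w$, in any proof graph / under the PBES semantics of Def.~\ref{def:solution_PGsem}. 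So substituting $\lambda d . \true$ for $\Zaplus$ and $\lambda d . \false$ for $\Zaminus$ in the bodies of the $X$- and $Y$-equations (the equations in $\E_L$) merely inlines the value these variables must take, without changing any semantics.

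The key steps, in order, are: (i) recall from Table~\ref{tab:sem-pred-formula} that $\sem{b}\eta\delta = \sem{b}\delta$ for a closed Boolean term $b$, so $\sem{\lambda d.\true} = \lambda w . \true$ and $\sem{\lambda d.\false} = \lambda w . \false$ regardless of $\eta, \delta$; (ii) invoke the standard syntactic substitution / "global substitution" lemma for PBESs from~\cite{groote_parameterised_2004}: if $\sigma X(d_X \colon D_X) = \psi_X$ is an equation of $\pbes$ and $f$ is a predicate function with $\sem{f}\eta\delta = \sem{\pbes}(X)$ for all $\eta,\delta$ — in particular if $f = \lambda d_X . \psi_X$ with $\psi_X$ closed and equal to $\sem{\pbes}(X)$ — then substituting $f$ for $X$ throughout the other equations, and optionally removing $X$'s equation, preserves the solution of the remaining variables; (iii) apply this simultaneously for all $\Zaplus \in \Zplus$ and $\Zaminus \in \Zminus$ (the substitutions are non-interfering because the equations for these variables have closed right-hand sides, so the order of substitution and the simultaneity cause no complications), keeping the equations $\E_{\Zplus}\E_{\Zminus}$ in place as the definition requires; (iv) conclude that $\sem{\pbescore}(Z) = \sem{\pbes}(Z)$ for every $Z \in \bnd(\pbes)$, i.e. $\sem{\pbescore} = \sem{\pbes}$ (the two PBESs have the same bound variables, and the environments agree on $\predvars \setminus \bnd(\pbes)$ vacuously).

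I do not expect a genuine obstacle here; the lemma is explicitly flagged in the text as following "immediately from standard results." The one point requiring a little care is making precise which form of the substitution lemma of~\cite{groote_parameterised_2004} is being used and checking its hypotheses: it needs that the predicate function being substituted equals the \emph{solution} of the variable it replaces, which here holds because the right-hand sides $\true$ and $\false$ are closed constants whose interpretation is trivially the solution of $\Zaplus$ and $\Zaminus$ (a $\nu$-equation with body $\true$ has solution $\lambda w.\true$; a $\mu$-equation with body $\false$ has solution $\lambda w.\false$). A short remark to that effect, plus the observation that closedness of the substituted formulas makes the simultaneous substitution unambiguous, is all that is needed; the rest is a citation.
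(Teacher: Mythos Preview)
Your proposal is correct and matches the paper's approach: the paper does not give a detailed proof but simply states that the result ``follows immediately from standard results on PBESs~\cite{groote_parameterised_2004},'' and your plan is precisely a spelled-out version of that citation---invoking the solution-preserving substitution lemma for PBESs after observing that the $\Zplus$ and $\Zminus$ equations have closed constant right-hand sides whose solutions are trivially $\lambda w.\true$ and $\lambda w.\false$.
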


\subsection{Removing Superfluous Evidence Information}\label{sec:calculate-true}

Once we have established that the solution to $\pbescore$, and hence $\pbes$, is $\true$, only the information for constructing a witness is relevant. We therefore remove the dependencies on predicate variables needed to construct counterexamples.

\begin{definition}
Let $\pbes = \pbesEc$. Then
\[
\truepbes = \pbesEquations[\Zaminus := \lambda d \colon D_{\Zaminus} . \false]_{\Zaminus \in \Zminus}\E_{\Zplus}\E_{\Zminus}
\]
\end{definition}
By definition of $\pbescore$ and $\truepbes$, the following result follows immediately from the semantics~\cite{groote_parameterised_2004}.
\begin{lemma}
Let $\pbes = \pbesEc$, then 
$\sem{\pbescore} = \sem{\truepbes}$.
\end{lemma}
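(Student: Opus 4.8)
The plan is to recognise $\pbescore$ as the result of a single semantics-preserving substitution applied to $\truepbes$, and then to invoke the standard substitution result for PBESs. Concretely, by definition $\truepbes = \pbesEquations[\Zaminus := \lambda d \colon D_{\Zaminus} . \false]_{\Zaminus \in \Zminus}\,\E_{\Zplus}\E_{\Zminus}$ while $\pbescore = \pbesEquations[\Zaplus := \lambda d \colon D_{\Zaplus} . \true]_{\Zaplus \in \Zplus}[\Zaminus := \lambda d \colon D_{\Zaminus} . \false]_{\Zaminus \in \Zminus}\,\E_{\Zplus}\E_{\Zminus}$, so the two differ only in that $\pbescore$ additionally applies $[\Zaplus := \lambda d \colon D_{\Zaplus} . \true]_{\Zaplus \in \Zplus}$ to the block $\pbesEquations$. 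Since every $\Zaplus \in \Zplus$ occurs in $\pbes$, and hence in $\truepbes$, only in the right-hand sides of $\pbesEquations$ — the equations of $\E_{\Zplus}$ are $\nu \Zaplus(d \colon D_{\Zaplus}) = \true$ and those of $\E_{\Zminus}$ are $\mu \Zaminus(d \colon D_{\Zaminus}) = \false$, none of which mentions a predicate variable — applying this substitution to $\pbesEquations$ has the same effect as applying it to all of $\truepbes$. Thus $\pbescore = \truepbes[\Zaplus := \lambda d \colon D_{\Zaplus} . \true]_{\Zaplus \in \Zplus}$, and it suffices to show that this substitution does not change $\sem{\cdot}$.

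For this I would use that every $\Zaplus$ is bound in $\truepbes$ by the equation $\nu \Zaplus(d \colon D_{\Zaplus}) = \true$, whose right-hand side is closed and contains no predicate variable; the associated greatest fixpoint is therefore the constant function, so $\sem{\truepbes}(\Zaplus) = \lambda v \in \D_{\Zaplus} . \true$, and this coincides with $\sem{\lambda d \colon D_{\Zaplus} . \true}\eta\delta$ for every $\eta$ and $\delta$. Hence the syntactic predicate function $\lambda d \colon D_{\Zaplus} . \true$ denotes exactly the solution of $\Zaplus$, and the standard substitution result for PBESs~\cite{groote_parameterised_2004} — replacing all occurrences of a bound predicate variable by a syntactic predicate function denoting its solution preserves the solution of every predicate variable — applied simultaneously to all $\Zaplus \in \Zplus$ yields $\sem{\truepbes} = \sem{\truepbes[\Zaplus := \lambda d \colon D_{\Zaplus} . \true]_{\Zaplus \in \Zplus}} = \sem{\pbescore}$.

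I do not expect a genuine obstacle here. The only points deserving care are (i) verifying the hypotheses of the cited substitution result — $\truepbes$ is closed and well-formed, being obtained from $\pbes$ by substitution, and the solution of each $\Zaplus$ is literally the denotation of $\lambda d \colon D_{\Zaplus} . \true$, so no mismatch between semantic and syntactic predicate functions arises — and (ii) the bookkeeping observation that no $\Zaplus$ occurs outside $\pbesEquations$, which lets us equate substitution into $\pbesEquations$ with substitution into the whole PBES. As an alternative, one could argue directly from the proof-graph semantics of Definition~\ref{def:proof_graph}: a proof graph for $\truepbes$ is turned into one for $\pbescore$, and conversely, by deleting or adding the vertices $\Zaplus(v)$ together with their incoming edges; such a vertex trivially satisfies its right-hand side $\true$ without any successor, $\Zaplus$ occurs only positively in the right-hand sides of $\pbesEquations$ so raising its value to $\true$ can only preserve satisfaction of those right-hand sides, and, being a sink, $\Zaplus(v)$ lies on no infinite path and hence does not affect the parity condition — which is precisely why the lemma "follows immediately from the semantics".
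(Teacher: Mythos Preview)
Your proposal is correct and follows essentially the same approach as the paper: the paper simply states that the result follows immediately from the semantics and the standard substitution results for PBESs~\cite{groote_parameterised_2004}, and your argument spells out precisely this reasoning, observing that $\pbescore$ is obtained from $\truepbes$ by substituting each $\Zaplus$ with its own solution $\lambda d.\,\true$.
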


\ifnewexample
\begin{example}\label{exa:running-true-pbes}
    Recall that the solution of $\pbescore$ of Example~\ref{exa:running-pbes} is $\true$. We use this to obtain the following PBES $\truepbes$:
    \begin{mcrl2equations}
    (\,    \mu X(s \colon N) &=& (\exists_{n: N}. (s \approx 1 \wedge 0 < n < 3 \wedge X(s+n) \wedge \Zaplus(s,s+n) ))\\ 
                         &\vee& (\exists_{n: N}. (0 < n < s < 3 \wedge X(s-n) \wedge \Zbplus(s,s-n) )) \\
                         &\vee& Y(s)\,)\\
    (\,    \nu Y(s \colon N) &=& s \approx 3 \wedge Y(s) \wedge \Zcplus(s,s) \,)\\
    (\, \nu \Zaplus(s,s1\colon N)&=& \true \,)\,\,\, 
    (\, \nu \Zbplus(s,s1\colon N)= \true \,)~\,\,  
    (\, \nu \Zcplus(s,s1\colon N)= \true \,) \\ 
    (\, \mu \Zaminus(s,s1\colon N)&=& \false \,)\,
    (\, \mu \Zbminus(s,s1\colon N)= \false \,)~\, 
    (\, \mu \Zcminus(s,s1\colon N)= \false \,)

    \end{mcrl2equations}
    The corresponding relevancy graph is obtained by removing all vertices for $\Zaminus, \Zbminus$ and $\Zcminus$ and their incoming edges from the relevancy graph in Example~\ref{exa:relavancy-graph-with-counterexample}.\qedhere
\end{example}
\else
\begin{example}\label{exa:running-true-pbes}
    From PBES $\pbes$ of Example~\ref{exa:running-pbesc} we can easily obtain PBES $\pbescore$ of Example~\ref{exa:running-pbes}. Recall that the solution of this PBES is $\true$. We use this to obtain the following PBES $\truepbes$:
    \begin{mcrl2equations}
        \mu X(s \colon N) &= (\exists n: N. (s \approx 1 \wedge n <5) \wedge X(1) \wedge \Zaplus(s, n, 1))\\ 
        &  \qquad \vee (X(2) \wedge \Zbplus(s, 2)) \vee Y(s)\\
        \nu Y(s \colon N) &= Y(2) \wedge \Zbplus(s, 2)\\
        \nu \Zaplus(s,v,s1 \colon N) &= \true\\
        \nu \Zbplus(s,s1 \colon N) &= \true\\
        \mu \Zaminus(s,v,s1\colon N)&= \false\\
        \mu \Zbminus(s,s1\colon N)&= \false\\
    \end{mcrl2equations}
    The corresponding relevancy graph is obtained by removing all vertices for $\Zaminus$ and $\Zbminus$ and their incoming edges from the relevancy graph in Example~\ref{exa:relavancy-graph-with-counterexample}.
\end{example}
\fi

\subsection{Simplifying a PBES using Evidence Information}\label{sec:combine}

We now show how a proof graph can be used to further simplify the right-hand sides in a PBES.
For this, recall that for a vertex $X(v)$ in the proof graph, the successors of $X(v)$ yield a predicate environment that makes $\varphi_X[d_X := v]$ true.
Using this information, we can syntactically remove all dependencies that are not in the proof graph from the right-hand sides in a PBES, without affecting the solution.
To achieve this, we define $\combine{\pbes, \pgG}$ as follows.

\begin{definition}\label{def:combine}
Let $\pbes$ be a PBES, and $\pgG = (V, E)$ be a proof graph. 
Then PBES $\combine{\pbes, \pgG}$ is obtained by replacing the right-hand side of every equation $\sigma X(d_X: D_X) = \varphi_X$ in $\pbes$ by the formula
\[
    \varphi_X[Y := \lambda e . \bigwedge_{v \in V_X} (d_X \approx v \implies e \in E_{X(v),Y} \land Y(e))]_{Y \in \predvars \setminus (\Zplus \cup \Zminus)}
\]
where $V_X = \{ v \in \mathbb{D} \mid X(v) \in V \}$ contains all values $v$ such that $X(v)$ is a vertex in $\pgG$, and $E_{X(v),Y} = \{ w \in \mathbb{D} \mid \langle X(v), Y(w) \rangle \in E \}$ contains all direct dependencies of $X(v)$ on $Y$ in the proof graph.
\end{definition}
Intuitively, this retains only those dependencies in $\varphi_X$ that according to the proof graph are needed to show that $X(v)$ is true, for any $v$.

\ifnewexample
\begin{example}
    Recall the equation for $X$ in PBES $\truepbes$ from Example~\ref{exa:running-true-pbes}.
    \begin{mcrl2equations}
    (\,    \mu X(s \colon N) &=& (\exists_{n: N}. (s \approx 1 \wedge 0 < n < 3 \wedge X(s+n) \wedge \Zaplus(s,s+n) ))\\ 
                         &\vee& (\exists_{n: N}. (0 < n < s < 3 \wedge X(s-n) \wedge \Zbplus(s,s-n) )) \\
                         &\vee& Y(s) \,)
    \end{mcrl2equations}
    The proof graph for $\pbescore$ (see Fig.~\ref{subfig:running-proofgraph}) has vertices $V = \{ X(1), X(3), Y(3) \}$ and edges $E = \{ (X(1), X(3)), (X(3), Y(3)), (Y(3), Y(3)) \}$.
    So, we infer $V_X = \{ 1, 3 \}$, $V_Y = \{ 3 \}$, $E_{X(1),X} = \{ 3 \}$, $E_{X(1), Y} = E_{X(3), X} = \emptyset$, $E_{X(3), Y} = \{ 3 \}$, $E_{Y(3), X} = \emptyset$, and $E_{Y(3), Y} = \{ 3 \}$.

    The right-hand side of the equation for $X$ in $\combine{\truepbes, \PG(\pbescore)}$, after $\beta$-reduction and simplification, is as follows.
    \begin{mcrl2equations}
    (\,    \mu X(s \colon N) &=& (\exists_{n: N}. (s \approx 1 \wedge 0 < n < 3 \\
        & & {} \wedge ( s \approx 1 \implies (s+n) \in \{3\} \wedge X(s+n) )  \\ 
        & & {} \wedge ( s \approx 3 \implies (s+n) \in \emptyset \wedge X(s+n) )  \wedge \Zaplus(s,s+n) ))\\ 
        &\vee& (\exists_{n: N}. (0 < n < s < 3 \\
        & & {} \wedge (s \approx 1 \implies (s-n) \in \{3\} \wedge X(s-n) ) \\
        & & {} \wedge (s \approx 3 \implies (s-n) \in \emptyset \wedge X(s-n) ) \wedge \Zbplus(s,s-n) )) \\
        &\vee& (s \approx 3 \implies s \in \{3\} \wedge Y(s)) \,)
    \end{mcrl2equations}
    This simplifies further to
    \begin{mcrl2equations}
    (\,    \mu X(s \colon N) &=& (\exists_{n: N}. (s \approx 1 \wedge n \approx 2 \wedge X(s+n)  \wedge \Zaplus(s,s+n)
         ) )\\ 
        &\vee& (\exists_{n: N}. (0 < n < s < 3 \wedge \Zbplus(s,s-n) )) \\
        &\vee& (s \approx 3 \implies Y(s)) \,)
    \end{mcrl2equations}

    If we also apply the corresponding substitution to the equation for $Y$ and apply some simplification, we obtain the following PBES.
    
    \begin{mcrl2equations}
    (\,    \mu X(s \colon N) &=& (\exists_{n: N}. (s \approx 1 \wedge n \approx 2 \wedge X(s+n)  \wedge \Zaplus(s,s+n)
         ) )\\ 
        &\vee& (\exists_{n: N}. (0 < n < s < 3 \wedge \Zbplus(s,s-n) )) \\
        &\vee& (s \approx 3 \implies Y(s)) \,)\\
    (\,    \nu Y(s \colon N) &=& s \approx 3 \wedge
            Y(s) \wedge \Zcplus(s, s) \,)\\
    (\, \nu \Zaplus(s,s1\colon N)&=& \true \,)\,\,\, 
    (\, \nu \Zbplus(s,s1\colon N)= \true \,)~\,\,  
    (\, \nu \Zcplus(s,s1\colon N)= \true \,) \\ 
    (\, \mu \Zaminus(s,s1\colon N)&=& \false \,)\, 
    (\, \mu \Zbminus(s,s1\colon N)= \false \,)~\, 
    (\, \mu \Zcminus(s,s1\colon N)= \false \,)
    \end{mcrl2equations}

    This PBES has the following relevancy graph, that no longer has dependencies on $\Zbplus$ and only a single dependency on $\Zaplus$, and is significantly smaller than the relevancy graph of Example~\ref{exa:relavancy-graph-with-counterexample}: 
    \begin{center}
          \begin{tikzpicture}[scale=0.5,
                every node/.style={scale=0.8},
                node distance = 1.5cm]
        
                \node (X1) {$X(1)$};
                \node[left of=X1, xshift=-0.5cm] (Zaplus1_3) {$\Zaplus(1,3)$};
                \node[right of=X1,xshift=1.2cm] (X3) {$X(3)$};
                \node[right of=X3,xshift=1.2cm] (Y3) {$Y(3)$};      
                \node[right of=Y3, xshift=0.5cm] (Zcplus3_3) {$\Zcplus(3,3)$};
                
                \path[->]
                (Y3) edge[loop below] node {} (Y3)
                (X1) edge (Zaplus1_3)
                (X1) edge (X3)
                (X3) edge (Y3)
                (Y3) edge (Zcplus3_3)
                ;
            \end{tikzpicture} \qedhere   \end{center}   
\end{example}
\else

\begin{example}
    Recall the equation for $X$ in PBES $\truepbes$ from Example~\ref{exa:running-true-pbes}.
    \begin{mcrl2equations}
        \mu X(s \colon N) &= (\exists n: N. (s \approx 1 \wedge n <5) \wedge X(1) \wedge \Zaplus(s, n, 1))\\ 
        &  \qquad \vee (X(2) \wedge \Zbplus(s, 2)) \vee Y(s)\\
    \end{mcrl2equations}
    The proof graph for $\pbescore$ (see Fig.~\ref{subfig:running-proofgraph}) has vertices $V = \{ X(1), X(2), Y(2) \}$ and edges $E = \{ (X(1), X(2)), (X(2), Y(2)), (Y(2), Y(2)) \}$.
    So, we infer $V_X = \{ 1, 2 \}$, $V_Y = \{ 2 \}$, $E_{X(1),X} = \{ 2 \}$, $E_{X(2), X} = E_{X(1), Y} = \emptyset$, $E_{X(2), Y} = \{ 2 \}$, $E_{Y(2), X} = \emptyset$, and $E_{Y(2), Y} = \{ 2 \}$.

    We perform the substitution according to the definition of $\combine{\truepbes, \PG(\pbescore)}$, and perform $\beta$-reduction to simplify the result. This results in the following PBES.
    
    \begin{mcrl2equations}
        \mu X(s \colon N) &= (\exists n: N. (s \approx 1 \land n <5) \\ 
        & \qquad \land (s \approx 1 \implies 1 \in \{ 2 \} \land X(1)) \land (s \approx 2 \implies 1 \in \emptyset \land X(1))\\ 
        & \qquad \land \Zaplus(s, n, 1))\\ 
        & \lor 
        ( (s \approx 1 \implies 2 \in \{ 2 \} \land X(2)) \land (s \approx 2 \implies 2 \in \emptyset \land X(2)) \\
        & \qquad \land \Zbplus(s, 2)
        ) \\
        & \lor 
        ( (s \approx 1 \implies s \in \emptyset \land Y(s)) \land (s \approx 2 \implies s \in \{ 2 \} \land Y(s)) 
        )
    \end{mcrl2equations}

    The first conjunct simplifies to
    \[
        \exists n: N. (s \approx 1 \land n <5) \land s \not \approx 1 \land s \not \approx 2 
        \land \Zaplus(s, n, 1)
    \]
    which reduces to $\false$.

    If we also apply the corresponding substitution to the equation for $Y$, and after logical simplification, we obtain the following PBES.
    
    \begin{mcrl2equations}
        \mu X(s \colon N) &= 
        ( 
            (s \approx 1 \implies X(2)) \land s \not \approx 2 
            \land \Zbplus(s, 2)
        ) \\
        & \qquad \lor 
        ( 
            s \not \approx 1 \land (s \approx 2 \implies Y(s)) 
        ) \\
        \nu Y(s \colon N) &= 
            Y(2) \wedge \Zbplus(s, 2)\\
        \nu \Zaplus(s,v,s1 \colon N) &= \true\\
        \nu \Zbplus(s,s1 \colon N) &= \true\\
        \mu \Zaminus(s,v,s1\colon N)&= \false\\
        \mu \Zbminus(s,s1\colon N)&= \false\\
    \end{mcrl2equations}

    From this PBES, the following relevancy graph is obtained. Note that in particular there are no more dependencies on $\Zaplus$. 
    \begin{center}
        \begin{tikzpicture}[scale=0.5,
            every node/.style={scale=0.8},
            node distance = 1.5cm]
    
            \node (X1) {$X(1)$};
            \node[right of=X1,xshift=2cm] (X2) {$X(2)$};
            \node[above of=X2] (Y2) {$Y(2)$};    
            
            \node[left of=X1,xshift=-2cm] (1) {$\Zbplus(1,2)$};
            \node[right of=X2,xshift=2cm] (8) {$\Zbplus(2,2)$};
            
            \path[->]
            (Y2) edge[out=-70,in=-45,loop] node {} (Y2)
            (X1) edge node {} (X2)
            (X2) edge node {} (Y2)
            (X1) edge[] node {} (1)
            (X2) edge node {} (8)
            (Y2) edge node {} (8)
            ;
        \end{tikzpicture}
    \end{center} 
    This shows that computing the solution and proof graph for our running example no longer depends on the possible values of variable $n$ of action label $a$, and significantly reduces the number of vertices that need to be explored.   
\end{example}
\fi

In the example we have combined the PBES $\truepbes$ with a proof graph for the strongly related PBES $\pbescore$. 
Towards establishing the correctness of this transformation, we first prove the following technical lemma, that shows that the substitution of a single predicate variable in a proof graph like context does not change the solution.

\begin{lemma}\label{lem:proof-graph-substitution-preserves-solution}
Let $V \subseteq \mathbb{D}$ be a set of values, $Y$ a predicate variable, and $\{ E_{v,Y} \subseteq \mathbb{D} \}_{v \in V}$ be a $V$-indexed family of sets of values.
For every $v \in V$, predicate environment $\eta$ such that $\eta(Y)(w) = \true$ iff $w \in E_{v,Y}$, predicate formula $\varphi$ of data variable $d$, and data environment $\delta$, we have
\[
\sem{\varphi}\eta\delta[v/d] \implies \sem{\varphi[Y := \lambda e . \bigwedge_{w \in V}( d \approx w \implies e \in E_{v, Y} \land Y(e)]}\eta\delta[v/d]
\]
\end{lemma}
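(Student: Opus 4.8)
The plan is to establish the slightly stronger claim that, under the hypotheses, $\sem{\varphi}\eta\delta[v/d] = \sem{\varphi[Y := g]}\eta\delta[v/d]$, where I abbreviate by $g$ the predicate function $\lambda e . \bigwedge_{w \in V}(d \approx w \implies e \in E_{v,Y} \land Y(e))$; the implication in the statement then follows trivially. Note that, since predicate formulas are negation-free, $\sem{\varphi}$ is monotone in $Y$, but in fact we will not even need this.

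The heart of the argument is to compute the denotation of $g$ in the context of $\eta$ and $\delta[v/d]$. Fix an arbitrary value $u$. Evaluating $g$ at $u$ means evaluating its body with $d$ bound to $v$ and $e$ bound to $u$. Since $v \in V$, the conjunct for $w = v$ occurs in $\bigwedge_{w \in V}$, and because its guard $d \approx v$ is satisfied it contributes exactly $(u \in E_{v,Y}) \land \eta(Y)(u)$; every conjunct for $w \in V$ with $w \neq v$ has an unsatisfied guard $d \approx w$ and is therefore vacuously $\true$. Hence $\sem{g}\eta\delta[v/d](u) = (u \in E_{v,Y}) \land \eta(Y)(u)$, and the hypothesis that $\eta(Y)(u) = \true$ iff $u \in E_{v,Y}$ collapses this to $\eta(Y)(u)$. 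As $u$ was arbitrary, $\sem{g}\eta\delta[v/d] = \eta(Y)$ as functions over $\mathbb{D}$. The assumption $v \in V$ is used precisely here, to guarantee that the ``diagonal'' conjunct $d \approx v$ is among those quantified over.

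It then remains to propagate this identity through $\varphi$. By a standard substitution lemma for predicate formulas --- or, for a self-contained treatment, a routine structural induction on $\varphi$ in which the only non-trivial case is $\varphi = Y(e)$, dealt with by the previous paragraph --- we obtain $\sem{\varphi[Y := g]}\eta\delta[v/d] = \sem{\varphi}\,\eta'\,\delta[v/d]$, where $\eta'$ agrees with $\eta$ except that it interprets $Y$ as $\sem{g}\eta\delta[v/d]$. Since that denotation equals $\eta(Y)$, we have $\eta' = \eta$, so $\sem{\varphi[Y := g]}\eta\delta[v/d] = \sem{\varphi}\eta\delta[v/d]$, which in particular yields the stated implication. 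The one subtlety --- and the only place any care is needed --- is the usual side condition for substitution: the free data variable $d$ of $g$ must not be captured by a data quantifier inside $\varphi$; this holds by the standard variable convention and is harmless in the intended use, where $\varphi$ is a right-hand side $\varphi_X$ and $d$ is its parameter $d_X$. Finally, I remark that we have in fact proved an equivalence, although only the displayed implication is needed for the correctness of the $\combineproc$ transformation in Sect.~\ref{sec:combine}.
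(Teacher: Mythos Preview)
Your proof is correct and follows essentially the same route as the paper's: both reduce to the base case $\varphi = Y(e')$ via structural induction, and both hinge on the observation that under $\delta[v/d]$ only the conjunct $w = v$ has a satisfied guard. The differences are presentational. You package the argument by first computing $\sem{g}\eta\delta[v/d] = \eta(Y)$ once and then invoking a substitution lemma, whereas the paper carries out the induction directly and proves only the stated implication. Your route yields the two-sided equivalence for free under the hypothesis on $\eta$; note, however, that this does \emph{not} subsume the paper's later Lemma~\ref{lem:proof-graph-substitution-preserves-solution-reverse}, which drops that hypothesis on $\eta$ and therefore genuinely needs the one-sided argument (or monotonicity). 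You are also explicit about the capture-avoidance side condition for $d$ in the quantifier cases, which the paper leaves implicit under ``most cases are immediate''; this is a fair point and your treatment is the more careful one.
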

\begin{proof}
Fix, $V$, $Y$, $\{ E_{v,Y} \subseteq \mathbb{D} \}_{v \in V}$, $v$ and $\eta$ as in the statement of the lemma.
We proceed by induction on the structure of $\varphi$.
Most cases are immediate, or follow from the induction hypothesis and the semantics of predicate formulas.
We focus on the interesting case where $\varphi = Z(e')$ for some $Z$ and $e'$.
If $Z \neq Y$, the result is immediate, since the substitution has no effect.
So, suppose $Z = Y$.
We have to show that
$\sem{Y(e')}\eta\delta[v/d]$ implies $\sem{Y(e')[Y := \lambda e . \bigwedge_{w \in V} (d \approx w \implies e \in E_{v,Y} \land Y(e))]}\eta\delta[v / d]$.

Assume $\sem{Y(e')}\eta\delta[v/d]$ is $\true$.
Hence, $\eta(Y)(\sem{e'}\delta[v/d])$ is $\true$, so by assumption, $\sem{e'}\delta[v/d] \in E_{v,Y}$, and therefore $\sem{e'\in E_{v,Y}}\delta[v/d]$ is $\true$.
Similarly, it immediately follows that $\sem{d \approx w}\delta[v/d]$ is $\true$ iff $w = v$.
So, it follows that
$\sem{\bigwedge_{w \in V} (d \approx w \implies e' \in E_{v,Y} \land Y(e'))}\eta\delta[v / d]$ is $\true$.
Using the definition of substitution and $\beta$-reduction, it then follows that
$\sem{Y(e')[Y := \lambda e . \bigwedge_{w \in V} (d \approx w \implies e \in E_{v,Y} \land Y(e))]}\eta\delta[v / d]$ is also $\true$. \qedhere
\end{proof}

We use this lemma to establish that the proof graph for $\pbescore$ can easily be extended into a proof graph for $\combine{\truepbes, G}$.
This shows that for values of interest to the original model checking problem, the result remains unchanged.

\begin{proposition}\label{prop:PG-core-is-PG-combine}
For every proof graph $G$ for $\pbescore$, there is a proof graph $G'$ for $\combine{\truepbes, G}$ such that $G$ is a subgraph of $G'$.
\end{proposition}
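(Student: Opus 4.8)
The plan is to take an arbitrary proof graph $G = (V, E)$ for $\pbescore$ and show that the very same vertex set and edge set, call it $G' = G$, is already a proof graph for $\combine{\truepbes, G}$; in fact I expect $G' = G$ works, so $G$ is trivially a subgraph of $G'$. There are two conditions in Definition~\ref{def:proof_graph} to verify. The parity condition (second bullet) is purely a property of the graph $G$ together with the ranks of the predicate variables, and since $\combine{\truepbes, G}$ has exactly the same equations in the same order as $\pbescore$ (the $\combineproc$ operation only rewrites right-hand sides, it does not add, remove, or reorder equations, nor does it touch the $\Zplus, \Zminus$ equations), the ranks are identical and the parity condition transfers verbatim. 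So the real work is the first bullet: for every $X(v) \in V$ and every $\delta$, we must show $\sem{\varphi_X^{\mathit{comb}}}\eta_{X(v)}\delta[v/d_X] = \true$, where $\varphi_X^{\mathit{comb}}$ is the rewritten right-hand side and $\eta_{X(v)}(Y)(w) = \true$ iff $\langle X(v), Y(w)\rangle \in E$.

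First I would relate the right-hand side $\varphi_X$ of $X$ in $\truepbes$ to the right-hand side $\varphi_X^{\mathit{core}}$ of $X$ in $\pbescore$: the only difference between these two PBESs is that $\pbescore$ additionally substitutes $\true$ for the $\Zplus$ variables, whereas in $\truepbes$ those occurrences of $\Zplus$ remain. Since $G$ is a proof graph for $\pbescore$, we know $\sem{\varphi_X^{\mathit{core}}}\eta_{X(v)}\delta[v/d_X] = \true$. Because the environment $\eta_{X(v)}$ assigns $\false$ to every predicate variable not appearing as a successor of $X(v)$, and in particular will assign values consistent with the $\Zplus$ equations being $\true$ — here I need to be careful: $\combineproc$ explicitly excludes $\Zplus \cup \Zminus$ from the substitution, so dependencies on $\Zplus$ vertices must actually be present in $G$ (a proof graph for $\pbescore$ need not contain $\Zplus$ vertices at all since they were substituted away). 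This is the point that needs the most care. The fix is to first note that any proof graph $G$ for $\pbescore$ can be extended, if necessary, by adding the relevant $\Zaplus(d,d')$ vertices (which are $\true$, have a trivially satisfiable right-hand side $\true$, and rank making them harmless for parity as greatest-fixpoint, maximal-rank variables) together with the edges from $X(v)$ to them — but since the proposition only claims existence of $G'$ with $G$ a subgraph, I can simply define $G'$ to be $G$ augmented with exactly the $\Zaplus$-vertices and edges dictated by the right-hand sides, and then $G$ is a subgraph of $G'$ as required.

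With that in hand, the core computation is a pointwise argument on $\varphi_X$: going from $\varphi_X$ (the $\truepbes$ right-hand side) to $\varphi_X^{\mathit{comb}}$ amounts to simultaneously substituting, for each non-$Z$ predicate variable $Y$, the predicate function $\lambda e.\bigwedge_{v' \in V_X}(d_X \approx v' \implies e \in E_{X(v'),Y} \land Y(e))$. Evaluated at the point $\delta[v/d_X]$ with $v \in V_X$, the guard $d_X \approx v'$ picks out exactly the conjunct for $v' = v$, so this reduces (after $\beta$-reduction) to $\lambda e.(e \in E_{X(v),Y} \land Y(e))$, which is exactly the setting of Lemma~\ref{lem:proof-graph-substitution-preserves-solution} with $E_{v,Y} := E_{X(v),Y}$: under $\eta_{X(v)}$ we have $\eta_{X(v)}(Y)(w) = \true$ iff $\langle X(v),Y(w)\rangle \in E$ iff $w \in E_{X(v),Y}$, which is precisely the hypothesis $\eta(Y)(w) = \true \iff w \in E_{v,Y}$ of that lemma. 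Applying the lemma once per predicate variable $Y$ (a finite iteration, using that substitutions for distinct $Y$ commute and each preserves $\true$) upgrades $\sem{\varphi_X}\eta_{X(v)}\delta[v/d_X] = \true$ to $\sem{\varphi_X^{\mathit{comb}}}\eta_{X(v)}\delta[v/d_X] = \true$. Finally, for the added $\Zaplus$ vertices in $G'$, their right-hand side is literally $\true$ so the first proof-graph condition is immediate, and being greatest-fixpoint variables of maximal rank they cannot create an odd-dominated infinite path, so the parity condition is preserved. This completes the verification that $G'$ is a proof graph for $\combine{\truepbes, G}$ containing $G$ as a subgraph.

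The main obstacle I anticipate is the bookkeeping around the $\Zplus$ variables: making precise that $\combineproc$ deliberately leaves $\Zplus$-dependencies in place, that these must therefore be witnessed in $G'$, and that augmenting $G$ with them is harmless for both proof-graph conditions. Everything else is a routine structural induction already packaged in Lemma~\ref{lem:proof-graph-substitution-preserves-solution}, plus the observation that $\combineproc$ preserves the equation list and hence the ranking.
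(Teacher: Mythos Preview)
Your proposal is correct and follows essentially the same approach as the paper: augment $G$ with the needed $\Zplus$ vertices and edges, observe that these new vertices are sinks (right-hand side $\true$) so the parity condition is unaffected, and invoke Lemma~\ref{lem:proof-graph-substitution-preserves-solution} repeatedly for the non-$Z$ variables. The paper makes one simplifying choice you may want to adopt: rather than adding only the $\Zplus$ vertices ``dictated by the right-hand sides'' (which requires arguing you have captured every instance arising under every quantifier valuation), it bluntly sets $G' = (V \cup V_{\Zplus},\, E \cup (V \times V_{\Zplus}))$ with $V_{\Zplus}$ containing \emph{all} $\Zplus$ instances, so that $\eta'_{X(v)}(\Zaplus) \equiv \true$ and the passage from the $\pbescore$ right-hand side to the $\truepbes$ right-hand side is immediate; also note that the correct reason these vertices do not disturb parity is that they have no outgoing edges (hence lie on no infinite path), not their rank.
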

\begin{proof}
Let $G = (V, E)$ be a proof graph for $\pbescore$. Define $G' = (V \cup V_{\Zplus}, E \cup (V \times V_{\Zplus}))$ with $V_{\Zplus} = \{ \Zaplus(e_L, e_a, e'_L) \mid \Zaplus \in \Zplus, e_L, e'_L \in \mathbb{D}_L, e_a \in \mathbb{D}_a \}$.
Clearly $G$ is a subgraph of $G'$.

We show that $G'$ is a proof graph for $\combine{\truepbes, G}$.
Note that vertices in $V_{\Zplus}$ do not appear on any infinite path, so the infinite paths in the graph are not changed.
For every $X$ bound in $\pbescore$, let $\varphi_X$ be the right-hand side of $X$ in $\pbescore$, and let 
\[
\psi_X := \varphi_X[Y := \lambda e . \bigwedge_{v \in V_X} (d_X \approx v \implies e \in E_{X(v),Y} \land Y(e))]_{Y \in \predvars \setminus (\Zplus \cup \Zminus)}
\]
be the right-hand side of $X$ in $\combine{\truepbes, G}$.
We need to prove that for every $X(v) \in V \cup V_{\Zplus}$ and for all $\delta$, 
$ 
    \sem{\psi_X}\eta_{X(v)} \delta[v/d_X] 
$
is $\true$, where $V_X = \{ v \in \mathbb{D} \mid X(v) \in V \}$ and
$E_{X(v),Y} = \{ w \in \mathbb{D} \mid \langle X(v), Y(w) \rangle \in E \}$
according to Definition~\ref{def:combine}, and $\eta_{X(v)}$ is such that $\eta_{X(v)}(Y)(w) = \true$ iff $\langle X(v), Y(w) \rangle \in E$ according to Definition~\ref{def:proof_graph}.
For $X(v) \in V_{\Zplus}$, the result follows immediately, as $\psi_X = \true$.
So, suppose $X(v) \in V$.
As $G$ is a proof graph for $\pbescore$, $\sem{\varphi_X}\eta_{X(v)} \delta[v/d_X]$ is $\true$.
Note $V_X$, $E_{X(v),Y}$ (for $Y \in \predvars \setminus (\Zplus \cup \Zminus)$) and $\eta_{X(v)}$ satisfy the conditions of Lemma~\ref{lem:proof-graph-substitution-preserves-solution}, so using the definition of $\psi_X$ and repeated application of the lemma for $Y \in \predvars \setminus (\Zplus \cup \Zminus)$,
it follows that 
$
    \sem{\psi_X}\eta_{X(v)} \delta[v/d_X]
$.
Hence $G'$ is a proof graph for $\combine{\truepbes, G}$. \qedhere
\end{proof}

\subsection{Providing Evidence for the Original PBES}\label{sec:correctness}

The result that every proof graph $G$ for $\pbescore$ can be extended into a proof graph $G'$ for $\combine{\truepbes, G}$ is sufficient to show that $\combine{\truepbes, G}$ does not change the solution of the PBES. However, $G'$ may contain too many variables with evidence information, resulting in witnesses that are larger than needed.
In practice, we therefore solve $\combine{\truepbes, G}$ again, leading to a proof graph that only contains the necessary dependencies on variables in $\Zplus$.

Correctness of our approach ultimately follows if a solution and proof graph obtained for PBES $\combine{\truepbes, G}$ are also a correct solution and proof graph for our original PBES $\pbes$.
We first establish the following result.

\begin{lemma}\label{lem:proof-graph-substitution-preserves-solution-reverse}
Let $V \subseteq \mathbb{D}$ be a set of values, $Y$ a predicate variable, and $\{ E_{v,Y} \subseteq \mathbb{D} \}_{v \in V}$ be a $V$-indexed family of sets of values.
For every $v \in V$, predicate environment $\eta$,
formula $\varphi$ over data variable $d$, and data environment $\delta$,
\[
\sem{\varphi[Y := \lambda e . \bigwedge_{w \in V}( d \approx w \implies e \in E_{v, Y} \land Y(e)]}\eta\delta[v/d] \implies \sem{\varphi}\eta\delta[v/d]
\]
\end{lemma}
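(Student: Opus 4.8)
The plan is to prove this by structural induction on $\varphi$, mirroring the structure of the proof of Lemma~\ref{lem:proof-graph-substitution-preserves-solution} but now establishing the reverse implication. Note that this direction is in fact \emph{easier} than the forward direction, because the substituted formula is logically stronger: intuitively, replacing $Y(e')$ with $\bigwedge_{w \in V}(d \approx w \implies e' \in E_{v,Y} \land Y(e'))$ only adds conjuncts and implications that can constrain truth but never create it out of nothing. Crucially, unlike Lemma~\ref{lem:proof-graph-substitution-preserves-solution}, this direction imposes \emph{no} assumption relating $\eta(Y)$ to $E_{v,Y}$, so $\eta$ is truly arbitrary.

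First I would fix $V$, $Y$, $\{E_{v,Y}\}_{v \in V}$, $v \in V$, $\eta$ and $\delta$ as in the statement, and proceed by induction on $\varphi$. The Boolean, conjunction, disjunction, and quantifier cases are routine: the substitution commutes with these connectives, and the implication transfers through them directly using the induction hypothesis (for disjunction, from $\sem{\varphi_1' \lor \varphi_2'}$ one of the disjuncts holds, apply the appropriate IH; for the existential, pick the witness and apply the IH; conjunction and universal are analogous; the Boolean case is trivial since the substitution leaves $b$ untouched). For a predicate variable $\varphi = Z(e')$ with $Z \neq Y$, the substitution does nothing and the implication is an equality.

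The only interesting case is $\varphi = Y(e')$. Here I assume that $\sem{Y(e')[Y := \lambda e . \bigwedge_{w \in V}(d \approx w \implies e \in E_{v,Y} \land Y(e))]}\eta\delta[v/d]$ is $\true$. After $\beta$-reduction, this says that $\sem{\bigwedge_{w \in V}(d \approx w \implies e' \in E_{v,Y} \land Y(e'))}\eta\delta[v/d]$ is $\true$. Since $v \in V$ and $\sem{d \approx v}\delta[v/d]$ is $\true$, the conjunct for $w = v$ forces $\sem{e' \in E_{v,Y} \land Y(e')}\eta\delta[v/d]$ to be $\true$; in particular $\sem{Y(e')}\eta\delta[v/d]$ is $\true$, which is exactly $\sem{\varphi}\eta\delta[v/d]$. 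This completes the induction.

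I do not anticipate a genuine obstacle here; the proof is a straightforward mirror of the preceding lemma. The one point requiring a little care is making sure the inductive step for the binding-free connectives is stated with the substitution pushed inside correctly (i.e., that $(\varphi_1 \oplus \varphi_2)[Y := \ldots] = \varphi_1[Y := \ldots] \oplus \varphi_2[Y := \ldots]$ and similarly for quantifiers, which holds because $d$ and the $w \in V$ are the relevant free variables and no capture occurs under the quantifiers since the bound variable of a quantifier differs from $d$ and from the freshly introduced $e$). Beyond that, the argument is a direct appeal to the semantics in Table~\ref{tab:sem-pred-formula} and the definition of substitution with $\beta$-reduction.
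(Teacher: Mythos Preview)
Your proposal is correct and follows essentially the same approach as the paper: structural induction on $\varphi$, with the only non-trivial case being $\varphi = Y(e')$, handled by $\beta$-reducing, instantiating the big conjunction at $w = v$ (using $v \in V$), and discharging the antecedent $d \approx v$ via the environment $\delta[v/d]$. Your additional remarks---that this direction needs no assumption on $\eta$ and the brief capture-avoidance note for the quantifier case---are sound observations that the paper leaves implicit.
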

\begin{proof}
Fix, $V$, $Y$, $\{ E_{v,Y} \subseteq \mathbb{D} \}_{v \in V}$, $v$ and $\eta$ as in the statement of the lemma.
We proceed by induction on the structure of $\varphi$.
Most cases are immediate, or follow from the induction hypothesis and the semantics of predicate formulas.
We focus on the interesting case where $\varphi = Z(e')$ for some $Z$ and $e'$.
If $Z \neq Y$, the result is immediate, since the substitution has no effect.
So, suppose $Z = Y$.

Assume $\sem{Y(e')[Y := \lambda e . \bigwedge_{w \in V} (d \approx w \implies e \in E_{v,Y} \land Y(e))]}\eta\delta[v / d]$ is $\true$.
Using the definition of substitution and $\beta$-reduction, $\sem{\bigwedge_{w \in V} (d \approx w \implies e' \in E_{v,Y} \land Y(e'))}\eta\delta[v / d]$ is $\true$.
This implies that for every $w \in V$, $\sem{(d \approx w \implies e' \in E_{v,Y} \land Y(e'))}\eta\delta[v / d]$ is $\true$. Since $v \in V$, in particular
 $\sem{(d \approx v \implies e' \in E_{v,Y} \land Y(e'))}\eta\delta[v / d]$ is $\true$.
So, according to the semantics,
if $\sem{d \approx v}\eta\delta[v / d]$ is $\true$ then $\sem{e' \in E_{v,Y} \land Y(e')}\eta\delta[v / d]$ is also $\true$.
That $\sem{d \approx v}\eta\delta[v / d]$ follows directly from the semantics, so 
$\sem{e' \in E_{v,Y} \land Y(e')}\eta\delta[v / d]$ is also $\true$, hence in particular $\sem{Y(e')}\eta\delta[v / d]$ is $\true$. \qedhere
\end{proof}
The lemma establishes that the proof graph we compute for $\combine{\truepbes, G}$ is also a proof graph for $\pbes$.

\begin{theorem}\label{thm:proof-graph-combine-is-proof-graph-E}
Let $G$ be a proof graph for $\pbescore$.
Then every proof graph $G'$ for $\combine{\truepbes, G}$ is also a proof graph for $\pbes$.
\end{theorem}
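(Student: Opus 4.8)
The plan is to show directly that an arbitrary proof graph $G' = (V', E')$ for $\combine{\truepbes, G}$ satisfies the two defining conditions of Definition~\ref{def:proof_graph} with respect to $\pbes$. The second condition (the parity condition on infinite paths) is essentially free: it only depends on the vertex set, the edge relation, and the ranking function, none of which change when we pass between $\combine{\truepbes, G}$, $\truepbes$, $\pbescore$, and $\pbes$. Indeed, $\pbes$, $\pbescore$ and $\truepbes$ all have the same bound predicate variables with the same fixpoint signs (the transformations only rewrite right-hand sides), so $\rank_{\pbes} = \rank_{\combine{\truepbes,G}}$, and the parity condition carries over verbatim. So the real content is the first condition: for every $X(v) \in V'$ and every $\delta$, we must show $\sem{\varphi_X^{\pbes}} \eta_{X(v)} \delta[v/d_X] = \true$, where $\varphi_X^{\pbes}$ is the right-hand side of $X$ in $\pbes$ and $\eta_{X(v)}$ is the environment induced by the successors of $X(v)$ in $G'$.

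The key steps, in order, are as follows. First, since $G'$ is a proof graph for $\combine{\truepbes, G}$, for each $X(v) \in V'$ we have $\sem{\psi_X}\eta_{X(v)}\delta[v/d_X] = \true$, where $\psi_X$ is the right-hand side of $X$ in $\combine{\truepbes,G}$, i.e. $\psi_X = \varphi_X^{\truepbes}[Y := \lambda e . \bigwedge_{w \in V_X}(d_X \approx w \implies e \in E_{X(w),Y} \land Y(e))]_{Y \in \predvars \setminus (\Zplus \cup \Zminus)}$. Applying Lemma~\ref{lem:proof-graph-substitution-preserves-solution-reverse} repeatedly — once for each $Y \in \predvars \setminus (\Zplus \cup \Zminus)$ that was substituted, peeling the substitutions off one at a time — we conclude $\sem{\varphi_X^{\truepbes}}\eta_{X(v)}\delta[v/d_X] = \true$. (A small point to handle carefully: Lemma~\ref{lem:proof-graph-substitution-preserves-solution-reverse} is stated for a single substituted variable, so to apply it to the simultaneous substitution I would argue that the simultaneous substitution equals the iterated substitution here, because the substituted $\lambda$-terms mention only variables in $\predvars \setminus (\Zplus\cup\Zminus)$ applied to fresh bound variables, so there is no interference — or, more cleanly, reprove the lemma for simultaneous substitution, which changes nothing in the argument.) Second, I relate $\varphi_X^{\truepbes}$ to $\varphi_X^{\pbes}$: by the definitions of $\truepbes$ and $\pbescore$, $\varphi_X^{\truepbes}$ is obtained from $\varphi_X^{\pbes}$ by substituting $\false$ for every $\Zaminus \in \Zminus$. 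Since $\Zaminus$ occurs in $\varphi_X^{\pbes}$ only in monotone (disjunctive/conjunctive positive) positions of the shape $(\,\rhsc_L(\varphi)[\cdots] \lor \Zaminus(\cdots)\,)$ or as a conjunct, replacing it by $\false$ only weakens the formula; hence $\sem{\varphi_X^{\truepbes}}\eta\delta \implies \sem{\varphi_X^{\pbes}}\eta\delta$ for every $\eta,\delta$ — this is precisely the monotonicity fact underlying the lemma $\sem{\pbescore} = \sem{\truepbes}$ already cited, applied at the level of individual right-hand sides. Combining, $\sem{\varphi_X^{\pbes}}\eta_{X(v)}\delta[v/d_X] = \true$, which is the first proof-graph condition for $\pbes$. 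Finally, I note $V' \subseteq \sig(\pbes)$ since the signatures coincide, so $G'$ is a genuine proof graph for $\pbes$.

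The main obstacle I anticipate is bookkeeping rather than conceptual difficulty: making the iterated application of Lemma~\ref{lem:proof-graph-substitution-preserves-solution-reverse} over a simultaneous substitution of all $Y \in \predvars \setminus (\Zplus \cup \Zminus)$ rigorous, since the lemma as stated handles one variable and one value $v$ at a time, whereas $\combine{\,\cdot\,}$ performs the substitution for all such $Y$ at once. I would address this either by an auxiliary lemma stating that the simultaneous substitution in Definition~\ref{def:combine} coincides with any sequential ordering of the single-variable substitutions (which holds because the inserted $\lambda$-abstractions are ``capture-free'' with respect to the other substituted variables), or by directly generalising Lemma~\ref{lem:proof-graph-substitution-preserves-solution-reverse}'s induction to simultaneous substitution. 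The only other thing to be careful about is the step from $\varphi_X^{\truepbes}$ to $\varphi_X^{\pbes}$: one must observe that the $\Zaminus$ variables genuinely occur only positively in the encoding of Table~\ref{tab:table_Ec}, so that $[\Zaminus := \false]$ is a solution-weakening operation and the implication goes in the direction we need.
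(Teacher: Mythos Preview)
Your proposal is correct and follows essentially the same route as the paper: first use Lemma~\ref{lem:proof-graph-substitution-preserves-solution-reverse} repeatedly to undo the substitutions and show $G'$ is a proof graph for $\truepbes$, then argue that any proof graph for $\truepbes$ is already a proof graph for $\pbes$ (the paper phrases this last step as ``no proof graph for $\pbes$ requires dependencies on variables in $\Zminus$'' rather than via monotonicity, but it is the same observation). Your explicit attention to the simultaneous-versus-iterated substitution issue is a point the paper leaves implicit.
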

\begin{proof}
Let $G = (V, E)$ be a proof graph for $\pbescore$ and $G' = (V', E')$ be a proof graph for $\combine{\truepbes, G}$.
For every $X$ bound in $\pbescore$, let $\varphi_X$ be the right-hand side of $X$ in $\pbes$. Note that $\varphi^t_X := \varphi_X[\Zaminus := \lambda d \colon D_{\Zaminus} . \false]_{\Zaminus \in \Zminus}$ is the right-hand side of $X$ in $\truepbes$.
Let 
\[
\psi_X := \varphi^t_X[Y := \lambda e . \bigwedge_{v \in V_X} (d_X \approx v \implies e \in E_{X(v),Y} \land Y(e))]_{Y \in \predvars \setminus (\Zplus \cup \Zminus)}
\]
where $V_X = \{ v \in \mathbb{D} \mid X(v) \in V \}$ and
$E_{X(v),Y} = \{ w \in \mathbb{D} \mid \langle X(v), Y(w) \rangle \in E \}$, 
be the right-hand side of $X$ in $\combine{\truepbes, G}$.

We prove that $G'$ is a proof graph for $\truepbes$.
As all variables in $\Zminus$ are $\false$, no proof graph for $\pbes$ requires dependencies on these variables. As $\truepbes$ only removes these variables from the right-hand sides, if $G'$ is a proof graph for $\truepbes$ it immediately is a proof graph for $\pbes$.

To prove that $G'$ is a proof graph for $\truepbes$, since it already is a proof graph for $\combine{\truepbes, G}$, it suffices to show that for every $X(v) \in V'$ and for all $\delta$,
$
\sem{\varphi^t_X}\eta_{X(v)}\delta[v/d_X]
$
is $\true$, where $\eta_{X(v)}$ is such that $\eta_{X(v)}(Y)(w) = \true$ iff $\langle X(v), Y(w) \rangle \in E'$.

Fix $X(v) \in V'$. As $G'$ is a proof graph for $\combine{\truepbes, G}$, we know that
$ 
    \sem{\psi_X}\eta_{X(v)} \delta[v/d_X] 
$ is $\true$.

Note that $V_X = \{ v \in \mathbb{D} \mid X(v) \in V \}$ and
$E_{X(v),Y} = \{ w \in \mathbb{D} \mid \langle X(v), Y(w) \rangle \in E \}$ (for $Y \in \predvars \setminus (\Zplus \cup \Zminus)$), as used in the definition of $\psi_X$ satisfy the conditions of Lemma~\ref{lem:proof-graph-substitution-preserves-solution-reverse}, so using repeated application of the lemma for $Y \in \predvars \setminus (\Zplus \cup \Zminus)$,
it follows that 
$
\sem{\varphi^t_X}\eta_{X(v)}\delta[v/d_X]
$.
Hence $G'$ is a proof graph for $\truepbes$, thus also for $\pbes$.
\qedhere
\end{proof}

Hence, the proof graph computed using our approach is a proof graph for the original PBES $\pbes$, and the witness we extract from it is a witness for the model checking problem encoded by $\pbes$.

\section{Implementation and Evaluation}\label{sec:implementation-evaluation}
The mCRL2 toolset~\cite{BGK+2019} supports the original approach to extract evidence from PBESs~\cite{wesselink_evidence_2018} in the explicit model checking tool \emph{pbessolve}.
We have extended this tool with the approach described in Sect.~\ref{sec:improve-evidence-extraction}.
The implementation does not precompute $\combine{\truepbes, \PG(\pbescore)}$. Instead, the corresponding right-hand sides are computed on-the-fly.

We have also extended the tool \emph{pbessolvesymbolic}, that supports symbolic solving of PBESs~\cite{laveaux_--fly_2022}, with a hybrid approach that enables evidence generation for symbolic model checking.
In this approach, $\pbescore$ is represented and solved symbolically. This results in a symbolic characterisation of $\PG(\pbescore)$. To obtain this proof graph, the symbolic implementation of Zielonka's algorithm has been extended in such a way that an over-approximation of the proof graph is efficiently computed.\footnote{The over-approximation is constructed such that it again is a proof graph, and has an edge-relation that has a compact symbolic representation.}
In order to reason symbolically about the underlying proof graph, the PBES must be in standard recursive form (SRF)~\cite{neele-thesis}, namely, every right-hand side is either disjunctive or conjunctive. This is not a restriction, as any PBES can be transformed into this format.
Exploring and solving $\combine{\truepbes, \PG(\pbescore)}$ is done explicitly. The implementation here uses the (symbolic) proof graph to again compute right-hand sides on-the-fly.

In both cases, the resulting parity game is solved using an explicit version of Zielonka's algorithm that results in \emph{minimal} proof graph~\cite{wesselink_evidence_2018}.

\subsection{Experimental Setup}
We evaluate the effectiveness of our approach using a number of mCRL2 specifications with $\mu$-calculus formulas.
Each mCRL2 specification is linearised into an LPE, and combined with a $\mu$-calculus formula into a PBES encoding the corresponding model checking problem.
To evaluate the effect of our improvements we compare the six different approaches to solve PBESs that are available in the mCRL2 toolset.
For explicit model checking, we compare directly solving the PBES with information about evidence~\cite{wesselink_evidence_2018} (\texttt{n-expl}), and our own approach (\texttt{expl}).
For symbolic model checking, the comparison is similar, but we use the symbolic algorithms from~\cite{laveaux_--fly_2022} to directly solve the PBESs with evidence (\texttt{n-symb}). We compare it to the hybrid implementation of our approach (\texttt{symb}).
To illustrate the overhead of solving PBES with information about evidence, we also include directly solving the PBES without that information explicitly~\cite{BGK+2019,groote_model-checking_2005} (\texttt{noCE-expl}), and symbolically (\texttt{noCE-symb})~\cite{laveaux_--fly_2022}.

The experiments are run using different types of models.
This includes our running example scaled to $M=1000$ (\textbf{witness1000}). 
We also use models based on industrial applications: the Storage Management System (\textbf{SMS}) and the Workload Management System (\textbf{WMS}) of the DIRAC Community Grid Solution for the LHCb experiment at CERN~\cite{remenska_using_2013}; the IEEE 1394 (\textbf{1394-fin}) interface standard that specifies a serial bus architecture for high-speed communications~\cite{garavel_four_2024}; two versions of the ERTMS Hybrid Level 3 train control system specification each with a different implementation of the Trackside System~\cite{bartholomeus_modelling_2018}, \textit{immediate update} (\textbf{ertms-hl3}) and \textit{simultaneous update} (\textbf{ertms-hl3su}); and a Mechanical Lung Ventilator~\cite{dortmont-MLV-2024} (\textbf{MLV}).
Moreover, we include a model of the onebit sliding window protocol (\textbf{onebit}) with buffers of size 2; and a model of the Hesselink's handshake register~\cite{hesselink_invariants_1998} (\textbf{hesselink}).
For each of these models we verify requirements that are described in the corresponding papers.
We include model checking problems that hold (\cmark), and ones that do not hold (\xmark).

All experiments are run 10 times, on a machine with 4 Intel 6136 CPUs and 3TB of RAM, running Ubuntu 20.04. We used a time-out of 1 hour (3600 seconds), and a memory limit of 64GB.
For models \textbf{ertms-hl3}, \textbf{WMS} and \textbf{MLV} only the cases \texttt{noCE-symb} and \texttt{symbolic} were run 10 times. A preliminary experiment showed that all other cases either time-out or run out-of-memory.
A reproduction package is available from \url{https://doi.org/10.5281/zenodo.14616612}.

\begin{table}
\caption{Experimental results for model checking, reporting number of vertices in the relevancy graph, and the mean total time over 10 runs (highlighted). For \texttt{expl} and \texttt{symb} we report the number of vertices in the relevancy graph after the second solving; the first solving results in the numbers reported in \texttt{noCE-expl} and \texttt{noCE-symb}, respectively. For every case, the fastest a) of \texttt{n-expl} and \texttt{expl}, and b) of \texttt{n-symb} and \texttt{symb} are highlighted. }
\label{tab:experiments}
\begin{center}
  \renewcommand{\arraystretch}{1}
    \scriptsize
    \begin{tabular}{r c r r r r r r}& Result & \texttt{noCE-expl}  & \texttt{n-expl} & \texttt{expl} & \texttt{noCE-symb}  & \texttt{n-symb} & \texttt{symb}\\
      \toprule
      \textbf{witness1000}& \multicolumn{7}{c}{}\\
 \emph{canDobAlways}&\cmark& 2\,000& 1\,001\,002& 5& 2\,000& --& 5 \\
 \rowcolor[gray]{0.95}& & 74.5s & 170.3s & \textbf{74.6}s & 251.0s & t-o & \textbf{247.1}s \\

      \toprule
      \textbf{SMS}& \multicolumn{7}{c}{}\\
 \emph{eventuallyDeleted}&\xmark& 25\,206& 195\,406& 1\,503& 27\,506& --& 2\,443 \\
 \rowcolor[gray]{0.95}& & 0.9s & 4.7s & \textbf{1.1}s & 1.7s & o-o-m & \textbf{2.1}s \\

 \emph{noTransitFromDeleted}&\xmark& 16\,106& 187\,338& 28& 18\,886& 504\,726& 68 \\
 \rowcolor[gray]{0.95}& & 0.6s & 3.6s & \textbf{0.8}s & 1.7s & 118.1s & \textbf{2.4}s \\

      \toprule
      \textbf{hesselink}& \multicolumn{7}{c}{}\\
 \emph{valuesCanBeRead}&\cmark& 1\,093\,760& 3\,325\,184& 2\,209\,472& 1\,093\,760& --& 2\,209\,472 \\
 \rowcolor[gray]{0.95}& & 36.3s & \textbf{135.8}s & 143.6s & 21.5s & t-o & \textbf{133.0}s \\

      \toprule
      \textbf{1394-fin}& \multicolumn{7}{c}{}\\
 \emph{noDeadlockUpgrade}&\cmark& 377\,138& 1\,034\,224& 705\,681& 377\,138& --& 705\,681 \\
 \rowcolor[gray]{0.95}& & 144.4s & \textbf{277.6}s & 405.9s & 26.1s & t-o & \textbf{310.9}s \\

 \emph{noDoubleConfirmation}&\cmark& 565\,708& 1\,222\,794& 894\,251& 565\,708& --& 894\,251 \\
 \rowcolor[gray]{0.95}& & 190.6s & \textbf{235.8}s & 405.9s & 9.0s & t-o & \textbf{240.8}s \\

 \emph{noDeadlock}&\cmark& 188\,569& 845\,655& 517\,112& 188\,569& --& 517\,112 \\
 \rowcolor[gray]{0.95}& & 81.2s & \textbf{209.7}s & 279.4s & 27.2s & t-o & \textbf{245.2}s \\

      \toprule
      \textbf{onebit}& \multicolumn{7}{c}{}\\
 \emph{messCanBeOvertaken}&\xmark& 164\,352& 1\,100\,672& 632\,512& 164\,352& --& 632\,512 \\
 \rowcolor[gray]{0.95}& & 7.1s & 39.3s & \textbf{37.2}s & 4.5s & o-o-m & \textbf{34.6}s \\

 \emph{messReadInevSent}&\xmark& 153\,984& 1\,090\,304& 4& 153\,984& --& 112\,981 \\
 \rowcolor[gray]{0.95}& & 6.5s & 31.2s & \textbf{6.8}s & 3.5s & t-o & \textbf{8.0}s \\

 \emph{noDeadlock}&\cmark& 81\,920& 1\,018\,240& 550\,080& 81\,920& --& 550\,080 \\
 \rowcolor[gray]{0.95}& & 3.4s & 31.7s & \textbf{29.0}s & 2.3s & o-o-m & \textbf{27.9}s \\

      \toprule
      \textbf{ertms-hl3su}& \multicolumn{7}{c}{}\\
 \emph{detStabilisation}&\cmark& --& --& --& 11\,973\,823& --& -- \\
 \rowcolor[gray]{0.95}& & t-o & t-o & t-o & 378.9s & t-o & o-o-m \\

 \emph{termination}&\xmark& 188\,865& --& 13& 196\,593& --& 29 \\
 \rowcolor[gray]{0.95}& & 3\,083.3s & t-o & \textbf{3\,087.2}s & 342.4s & t-o & \textbf{352.1}s \\

      \toprule
      \textbf{ertms-hl3}& \multicolumn{7}{c}{}\\
 \emph{termination}&\xmark& --& --& --& 321\,421& --& 90 \\
 \rowcolor[gray]{0.95}& & t-o & t-o & t-o & 511.9s & t-o & \textbf{514.1}s \\

 \emph{detStabilisation}&\xmark& --& --& --& 17\,756\,789& --& 685 \\
 \rowcolor[gray]{0.95}& & t-o & t-o & t-o & 364.9s & t-o & \textbf{406.0}s \\

      \toprule
      \textbf{MLV}& \multicolumn{7}{c}{}\\
 \emph{scenarioResumeVentilation}&\cmark& --& --& --& 6.15131e+23& --& 5\,950 \\
 \rowcolor[gray]{0.95}& & t-o & t-o & t-o & 1\,663.7s & t-o & \textbf{1\,827.0}s \\

 \emph{CONT38}&\xmark& --& --& --& 5.08225e+23& --& 5\,968 \\
 \rowcolor[gray]{0.95}& & t-o & t-o & t-o & 1\,519.8s & t-o & \textbf{1\,565.0}s \\

      \toprule
      \textbf{WMS}& \multicolumn{7}{c}{}\\
 \emph{jobFailedToDone}&\xmark& --& --& --& 269\,767\,184& --& 226 \\
 \rowcolor[gray]{0.95}& & t-o & o-o-m & t-o & 20.8s & t-o & \textbf{28.1}s \\

 \emph{noZombieJobs}&\xmark& --& --& --& 316\,631\,360& --& 38 \\
 \rowcolor[gray]{0.95}& & t-o & o-o-m & t-o & 24.7s & t-o & \textbf{56.2}s \\

    \bottomrule
\end{tabular}
\end{center}
\end{table}

\subsection{Results and Discussion}
The results are presented in Table~\ref{tab:experiments}. We highlight the fastest run with counterexample information for both the explicit and symbolic cases.
We report the number of vertices in the relevancy graph generated to solve the model checking problem and the mean total running time of ten runs in seconds (`t-o' for time-out, `o-o-m' for out-of-memory).
The standard deviation is typically below
10\% of the mean.\footnote{The SDs for the only cases where it exceeds 10\% of the mean are: case \texttt{noCE-expl} \textbf{SMS} \textit{eventuallyDeleted} and \textit{noTransitFromDeleted}: 0.1; case \texttt{noCE-symb} \textbf{hesselink}: 3.2, \textbf{1394-fin} \textit{noDoubleConfirmation}: 5.3 and \textit{noDeadlock}: 4.9, \textbf{WMS} \textit{noZombieJobs}: 3.0; and case \texttt{n-symb} \textbf{WMS} \textit{jobFailedToDone}: 6.3 and \textit{noZombieJobs}: 8.0} 

We focus our discussion on the approaches that support evidence generation.
For the explicit implementation, our approach (\texttt{expl}) is typically comparable with the original approach (\texttt{n-expl}).
In some cases, we see that our approach reduces the running time of the verification in comparison with the original one, e.g., for model \textbf{onebit} and property \textit{messReadInevSent}. This is typically the case when the evidence is small, and in these cases the running time is similar to that of solving the PBES without additional information (\texttt{noCE-expl}).
In other cases we instead see that \texttt{expl} has some overhead, e.g., for model \textbf{1394-fin} and property \textit{noDoubleConfirmation}.
Closer inspection suggests the evidence in these cases comprises most of the state space.
Since our \texttt{expl} approach is a two-step approach, essentially the full exploration is performed twice,  resulting in a larger running time.

Moreover, the experiments show that our approach for evidence generation in the context of symbolic model checking (\texttt{symb}) always outperforms the original approach (\texttt{n-symb}), typically enabling evidence generation for symbolic model checking, while this is infeasible with the original approach.

\section{Conclusion}\label{sec:conclusion}

In this paper we have described an approach to solving PBESs that allows for efficient evidence generation.
Our approach solves a PBES without evidence information and uses its solution to simplify solving the PBES with evidence information as described in~\cite{wesselink_evidence_2018}.
We have established correctness of our approach, and implemented this in the mCRL2 toolset as part of an explicit and a symbolic model checking tool.

Our evaluation shows that for explicit model checking, the performance is comparable to the original approach to evidence generation from~\cite{wesselink_evidence_2018}. In case the counterexample is small, little overhead is incurred compared to solving the PBES without evidence information.
Our approach makes evidence generation from PBESs efficient for symbolic checking, whereas this was not feasible before.

We plan to integrate our approach with other optimisations in the PBES solvers in the mCRL2 toolset, and to preserve evidence information in static analysis techniques that are often used as preprocessing~\cite{keiren_improved_2013,orzan_invariants_2010}.

\subsubsection*{Acknowledgements} This work was supported by the MACHINAIDE project (ITEA3, No. 18030), the National Growth Fund through the Dutch 6G flagship project “Future Network Services”, and the Cynergy4MIE project (ChipsJU, No. 101140226).

\bibliographystyle{splncs04}
\bibliography{bibfile}

\begin{thebibliography}{10}
\providecommand{\url}[1]{\texttt{#1}}
\providecommand{\urlprefix}{URL }
\providecommand{\doi}[1]{https://doi.org/#1}

\bibitem{baier_principles_2008}
Baier, C., Katoen, J.P.: Principles of model checking. {MIT} Press (2008)

\bibitem{bartholomeus_modelling_2018}
Bartholomeus, M., Luttik, B., Willemse, T.A.C.: Modelling and analysing {ERTMS}
  hybrid level 3 with the {mCRL2} toolset. In: {FMICS}. Lecture Notes in
  Computer Science, vol. 11119, pp. 98--114. Springer (2018).
  \doi{10.1007/978-3-030-00244-2\_7}

\bibitem{blom_symbolic_2008}
Blom, S., van~de Pol, J.: Symbolic reachability for process algebras with
  recursive data types. In: {ICTAC}. Lecture Notes in Computer Science,
  vol.~5160, pp. 81--95. Springer (2008). \doi{10.1007/978-3-540-85762-4\_6}

\bibitem{BGK+2019}
Bunte, O., Groote, J.F., Keiren, J.J.A., Laveaux, M., Neele, T., de~Vink, E.P.,
  Wesselink, W., Wijs, A., Willemse, T.A.C.: The {mCRL2} toolset for analysing
  concurrent systems - improvements in expressivity and usability. In: {TACAS}
  {(2)}. Lecture Notes in Computer Science, vol. 11428, pp. 21--39. Springer
  (2019). \doi{10.1007/978-3-030-17465-1\_2}

\bibitem{burch_symbolic_1992}
Burch, J.R., Clarke, E.M., McMillan, K.L., Dill, D.L., Hwang, L.J.: Symbolic
  model checking: 10{\^{}}20 states and beyond. Inf. Comput.  \textbf{98}(2),
  142--170 (1992). \doi{10.1016/0890-5401(92)90017-A}

\bibitem{busard_symbolic_2017}
Busard, S.: Symbolic model checking of multi-modal logics: uniform strategies
  and rich explanations. Ph.D. thesis, Catholic University of Louvain,
  Louvain-la-Neuve, Belgium (2017), \url{https://hdl.handle.net/2078.1/186372}

\bibitem{cimatti_nusmv_2002}
Cimatti, A., Clarke, E.M., Giunchiglia, E., Giunchiglia, F., Pistore, M.,
  Roveri, M., Sebastiani, R., Tacchella, A.: {NuSMV} 2: An opensource tool for
  symbolic model checking. In: {CAV}. Lecture Notes in Computer Science,
  vol.~2404, pp. 359--364. Springer (2002). \doi{10.1007/3-540-45657-0\_29}

\bibitem{clarke_model_2018}
Clarke, E.M., Grumberg, O., Kroening, D., Peled, D.A., Veith, H.: Model
  checking, 2nd Edition. {MIT} Press (2018)

\bibitem{cranen_proof_2013}
Cranen, S., Luttik, B., Willemse, T.A.C.: Proof graphs for parameterised
  boolean equation systems. In: {CONCUR}. Lecture Notes in Computer Science,
  vol.~8052, pp. 470--484. Springer (2013). \doi{10.1007/978-3-642-40184-8\_33}

\bibitem{van_dam_instantiation_2008}
van Dam, A., Ploeger, B., Willemse, T.A.C.: Instantiation for parameterised
  boolean equation systems. In: {ICTAC}. Lecture Notes in Computer Science,
  vol.~5160, pp. 440--454. Springer (2008). \doi{10.1007/978-3-540-85762-4\_30}

\bibitem{dortmont-MLV-2024}
van Dortmont, D., Keiren, J.J.A., Willemse, T.A.C.: Modelling and analysing a
  mechanical lung ventilator in {mCRL2}. In: {ABZ}. Lecture Notes in Computer
  Science, vol. 14759, pp. 341--359. Springer (2024).
  \doi{10.1007/978-3-031-63790-2\_27}

\bibitem{garavel_cadp_2013}
Garavel, H., Lang, F., Mateescu, R., Serwe, W.: {CADP} 2011: a toolbox for the
  construction and analysis of distributed processes. Int. J. Softw. Tools
  Technol. Transf.  \textbf{15}(2),  89--107 (2013).
  \doi{10.1007/S10009-012-0244-Z}

\bibitem{garavel_four_2024}
Garavel, H., Luttik, B.: Four formal models of {IEEE} 1394 link layer. In:
  MARS@ETAPS. {EPTCS}, vol.~399, pp. 21--100 (2024). \doi{10.4204/EPTCS.399.5}

\bibitem{groote_mousavi_2014}
Groote, J.F., Mousavi, M.R.: Modeling and Analysis of Communicating Systems.
  {MIT} Press (2014)

\bibitem{groote_model-checking_2005}
Groote, J.F., Willemse, T.A.C.: Model-checking processes with data. Sci.
  Comput. Program.  \textbf{56}(3),  251--273 (2005).
  \doi{10.1016/J.SCICO.2004.08.002}

\bibitem{groote_parameterised_2004}
Groote, J.F., Willemse, T.A.C.: Parameterised boolean equation systems. Theor.
  Comput. Sci.  \textbf{343}(3),  332--369 (2005).
  \doi{10.1016/J.TCS.2005.06.016}

\bibitem{hesselink_invariants_1998}
Hesselink, W.H.: Invariants for the construction of a handshake register. Inf.
  Process. Lett.  \textbf{68}(4),  173--177 (1998).
  \doi{10.1016/S0020-0190(98)00158-6}

\bibitem{kant_efficient_2012}
Kant, G., van~de Pol, J.: Efficient instantiation of parameterised boolean
  equation systems to parity games. In: {GRAPHITE}. {EPTCS}, vol.~99, pp.
  50--65 (2012). \doi{10.4204/EPTCS.99.7}

\bibitem{keiren_improved_2013}
Keiren, J.J.A., Wesselink, W., Willemse, T.A.C.: Liveness analysis for
  parameterised boolean equation systems. In: {ATVA}. Lecture Notes in Computer
  Science, vol.~8837, pp. 219--234. Springer (2014).
  \doi{10.1007/978-3-319-11936-6\_16}

\bibitem{kick-95}
Kick, A.: Tableaux and witnesses for the {$\mu$}-calculus. Tech. rep.,
  Universitat Karlsruhe, Germany (1995)

\bibitem{kozen_results_1983}
Kozen, D.: Results on the propositional mu-calculus. Theor. Comput. Sci.
  \textbf{27},  333--354 (1983). \doi{10.1016/0304-3975(82)90125-6}

\bibitem{kwiatkowska_prism_2011}
Kwiatkowska, M.Z., Norman, G., Parker, D.: {PRISM} 4.0: Verification of
  probabilistic real-time systems. In: {CAV}. Lecture Notes in Computer
  Science, vol.~6806, pp. 585--591. Springer (2011).
  \doi{10.1007/978-3-642-22110-1\_47}

\bibitem{laveaux_--fly_2022}
Laveaux, M., Wesselink, W., Willemse, T.A.C.: On-the-fly solving for symbolic
  parity games. In: {TACAS} {(2)}. Lecture Notes in Computer Science, vol.
  13244, pp. 137--155. Springer (2022). \doi{10.1007/978-3-030-99527-0\_8}

\bibitem{liemextraction2023}
Liem, E.: Extraction of Invariants in Parameterised Boolean Equation Systems.
  Master's thesis, Eindhoven University of Technology (2023)

\bibitem{mateescu_efficient_2000}
Mateescu, R.: Efficient diagnostic generation for boolean equation systems. In:
  {TACAS}. Lecture Notes in Computer Science, vol.~1785, pp. 251--265. Springer
  (2000). \doi{10.1007/3-540-46419-0\_18}

\bibitem{mcmillan_symbolic_1993}
McMillan, K.L.: Symbolic model checking. Kluwer (1993).
  \doi{10.1007/978-1-4615-3190-6}

\bibitem{neele-thesis}
Neele, T.: Reductions for parity games and model checking. Ph.D. thesis,
  Mathematics and Computer Science, Eindhoven University of Technology, The
  Netherlands (Sep 2020)

\bibitem{orzan_invariants_2010}
Orzan, S., Willemse, T.A.C.: Invariants for parameterised boolean equation
  systems. Theor. Comput. Sci.  \textbf{411}(11-13),  1338--1371 (2010).
  \doi{10.1016/J.TCS.2009.11.001}

\bibitem{ploeger_verification_2011}
Ploeger, B., Wesselink, W., Willemse, T.A.C.: Verification of reactive systems
  via instantiation of parameterised boolean equation systems. Inf. Comput.
  \textbf{209}(4),  637--663 (2011). \doi{10.1016/J.IC.2010.11.025}

\bibitem{remenska_using_2013}
Remenska, D., Willemse, T.A.C., Verstoep, K., Templon, J., Bal, H.E.: Using
  model checking to analyze the system behavior of the {LHC} production grid.
  Future Gener. Comput. Syst.  \textbf{29}(8),  2239--2251 (2013).
  \doi{10.1016/J.FUTURE.2013.06.004}

\bibitem{stirling_local_1991}
Stirling, C., Walker, D.: Local model checking in the modal mu-calculus. Theor.
  Comput. Sci.  \textbf{89}(1),  161--177 (1991).
  \doi{10.1016/0304-3975(90)90110-4}

\bibitem{tan_evidence-based_2002}
Tan, L., Cleaveland, R.: Evidence-based model checking. In: {CAV}. Lecture
  Notes in Computer Science, vol.~2404, pp. 455--470. Springer (2002).
  \doi{10.1007/3-540-45657-0\_37}

\bibitem{wesselink_evidence_2018}
Wesselink, W., Willemse, T.A.C.: Evidence extraction from parameterised boolean
  equation systems. In: ARQNL@IJCAR. {CEUR} Workshop Proceedings, vol.~2095,
  pp. 86--100. CEUR-WS.org (2018),
  \url{https://ceur-ws.org/Vol-2095/paper6.pdf}

\bibitem{zielonka_infinite_1998}
Zielonka, W.: Infinite games on finitely coloured graphs with applications to
  automata on infinite trees. Theor. Comput. Sci.  \textbf{200}(1-2),  135--183
  (1998). \doi{10.1016/S0304-3975(98)00009-7}

\end{thebibliography}

\end{document}